\newtheorem{theorem}{Theorem}{}
{}
\newtheorem{remark}{Remark}{}
\definecolor{darkblue}{rgb}{0.0, 0.0, 0.55}
\begin{document}

\DraftwatermarkOptions{%
angle=0,
hpos=0.5\paperwidth,
vpos=0.985\paperheight,
fontsize=0.012\paperwidth,
color={[gray]{0.2}},
text={
  \parbox{0.99\textwidth}{This paper is a postprint of a paper submitted to and accepted for publication in IET Control Theory \& Applications and is subject to Institution of Engineering and Technology Copyright. The copy of record is available at the IET Digital Library. (doi: \href{http://dx.doi.org/10.1049/iet-cta.2018.6163}{10.1049/iet-cta.2018.6163})}},
}


\title{Online Model-Free Reinforcement Learning for the Automatic Control of a Flexible Wing Aircraft}

\author{\au{Mohammed Abouheaf$^{1,2\corr}$}, \au{Wail Gueaieb$^{1}$}, and \au{Frank Lewis$^{3}$}}

\address{\add{1}{School of Electrical Engineering and Computer Science, University of Ottawa, Ottawa, Ontario, Canada}
	\add{2}{Electrical Engineering, College of Energy Engineering, Aswan University, Aswan, Egypt}
	\add{3}{Electrical Engineering, University of Texas at Arlington, Arlington, Texas, United States}
	\email{mohammed.abouheaf@uottawa.ca}}

\begin{abstract}
The control problem of the flexible wing aircraft is challenging due to the prevailing and high nonlinear deformations in the flexible wing system. This urged for new control mechanisms that are robust to the real-time variations in the wing's aerodynamics. An online control mechanism based on a value iteration reinforcement learning process is developed for flexible wing aerial structures. It employs a model-free control policy framework and a guaranteed convergent adaptive learning architecture to solve the system's Bellman optimality equation. A Riccati equation is derived and shown to be equivalent to solving the underlying Bellman equation. The online reinforcement learning solution is implemented using means of an adaptive-critic mechanism. The controller is proven to be asymptotically stable in the Lyapunov sense. It is assessed through computer simulations and its superior performance is demonstrated on two scenarios under different operating conditions.
\end{abstract}
\keywords{Flexible Wing Aircraft, Optimal Control, Reinforcement Learning, Value Iteration, Adaptive Critics.}

\maketitle

\section{Introduction} 
\label{sec:Introduction}
Unmanned flexible wing aircraft have gained an increasing attention due to their low operating cost, long endurance, and relatively short runway. Recent research investigations tackled several aspects of flexible wing systems, including modeling approaches and control mechanisms. Unlike fixed wing aircraft, the high nonlinear deformation in the flexible wing makes it very challenging to model such type of aircraft and to design autonomous controllers for them~\cite{cook_spottiswoode_2005,Kilkenny_1983,Kilkenny_1984,Cook_Kilkenny_1986,Kilkenny_1986,Blake_1991,cook_1994,Ochi_2017,Sweeting1981}. Classical control approaches depend on accurate system models, and so may not be suitable for these unmanned aerial vehicles (UAVs). In this paper, an innovative model-free adaptive learning controller is developed for the automatic real-time control of flexible wing aircraft. This work brings together ideas form the machine learning, optimal control, and adaptive critics.

A significant body of fundamental research has already been conducted on flexible wing aircraft~\cite{Kilkenny_1983,Sweeting1981,Kilkenny_1984,Blake_1991,Kilkenny_1986,Cook_Kilkenny_1986,cook_1994,Cook_2013}. The early studies of hang gliders involved aeroelastic and aerodynamic modeling approaches, which resulted in approximate aerodynamic equations of motion~\cite{Kroo_1983}. A mobile test rig was developed in~\cite{Kilkenny_1983} to test the aerodynamic characteristics of a set of hang glider wings. In addition, a fundamental database of wind tunnel tests for hang gliders is provided in~\cite{Kilkenny_1984}. Powton~\cite{Powton_1995} studied the effects of the aerodynamic nonlinearities in terms of the wing's camber and twist variations. A study of the aerodynamic models for flexible wing aircraft along with a practical measurement setup was conducted in~\cite{cook_spottiswoode_2005}. 
Furthermore, a number of researchers attempted to decouple the aircraft's motion along the longitudinal and lateral planes. For example, Kroo~\cite{Kroo_1983} developed longitudinal and lateral small perturbation equations of motion in~\cite{Kroo_1983}. A longitudinal-lateral aerodynamic model is usually derived by assuming a rigid wing while the aerodynamic derivatives are added at the end of the modeling process~\cite{DE_MATTEIS_1990,De_Matteis_1991}. Another of such models was developed by Spottiswoode~\cite{Spottiswoode_2001}. 

The dynamics of a flexible wing system are usually modeled by considering two interacting bodies (the wing and the pilot/fuselage frame) under geometric and kinematic constraints~\cite{Ochi_2015}. A nine-degree-of-freedom (DOF) flight dynamical model has been recently proposed by Ochi~\cite{Ochi_2017} with a pilot-handling feedback control mechanism. The work followed a rigorous analytical process which led to a set of nonlinear state equations.

One of the most commonly methods of controlling a flexible wing aircraft is to use a weight shift mechanism~\cite{Kilkenny_1983,Kilkenny_1984,Cook_Kilkenny_1986}. In this case, the motion of the aircraft is governed by the shift in the relative centers of gravity of the wing and the fuselage (as opposed to the fuselage weight shift itself)~\cite{cook_spottiswoode_2005}. It is shown that the geometry of the control arms affects the the maximum control moments in a flexible wing system~\cite{cook_spottiswoode_2005} and that the static pitching stability can be increased by lowering the center of gravity. A mathematical model is employed in~\cite{cook_spottiswoode_2005} to determine the stability of a flexible wing vehicle within a speed envelope. It is shown that the lateral stability margins are larger compared to those of a typical airplane.  The initial investigations about the longitudinal stability of the hang glider were conducted in~\cite{Blake_1991}. The classical longitudinal stability is shown to be effective in understanding the robustness of the hang glider provided that some conditions are satisfied~\cite{cook_1994}. The effect of the wing's aerodynamic nonlinearities on the pitching moment is studied in~\cite{Powton_1995}. The longitudinal stability and control properties of the flexible wing UAV are investigated in~\cite{Rollins_2000}. More stability analyses were conducted in~\cite{DE_MATTEIS_1990,De_Matteis_1991} but in the frequency domain this time. A regulation system combined of two control structures for planar vertical takeoff and landing aircraft is proposed in~\cite{Aguilar17}. The vertical variable is regulated using saturated-feedback linearization mechanism, while the angular and horizontal variables are controlled using combined sliding mode-PD control mechanism. Regulator schemes are proposed for magnetic generator and hexarotor mechanical systems where the stability is analyzed using  Lyapunov mathematical framework~\cite{Rubio18}. Another controller based on a Lyapunov method and robust feedback linearization scheme is proposed for nonlinear processes in~\cite{RUBIO2018155}.

Artificial Intelligence (AI) has become a vivid area to solve various optimal control problems for single and multi-agent systems. Dynamic programming is typically applied to solve optimization problems in the framework of optimal control theory~\cite{Howard_1960,Werbos1992}. However, dynamic programming methods generally suffer from the curse of dimensionality in both the state and action spaces. Hence, Approximate Dynamic Approaches (ADP) are developed to alleviate these difficulties by using Reinforcement Learning (RL) approaches and means of adaptive critics to provide temporal difference solutions for the optimal control problems~\cite{Howard_1960,Werbos1992,Bert_1995,Werbos1990,Sutton_1998}. The ADP approaches can be classified into four main categories:
\begin{enumerate*}[(i), itemjoin={{; }}, itemjoin*={{; and }}]
\item Heuristic Dynamic Programming
\item Dual Heuristic Dynamic Programming
\item Action Dependent Heuristic Dynamic Programming 
\item Action Dependent Dual Heuristic Dynamic Programming.
\end{enumerate*}
These categories depend on the form of the solving value function and the accompanying choice of the optimal control actions. Different versions of approximate dynamic programming solutions are proposed to solve multi-agent control problems in~\cite{AbouheafRV2017,AbouheaIJCNN2013,AbouheafCH2014}. These include a class of innovative implementations for the respective costate equations of the underlying control problems. Classical mechanics are used to derive the basic optimality principles to solve for optimal control problems in~\cite{Lewis_2012,Bellman1957,Bryson1996}. The minimum cost-to-go function is shown to be equivalent to the solution of the underlying Hamilton-Jacobi-Bellman equation (HJB)~\cite{Lewis_2012,Bellman1957,Bryson1996}. The optimal control theory provides the platform to derive the optimality conditions to solve various types of control problems using the ADP approaches~\cite{Lewis_2012,Howard_1960}. These conditions involve expressions for the optimal value functions and the respective optimal control policies, and hence result in the underlying Bellman optimality, Hamilton-Jacobi-Bellman (HJB), and costate equations.

Reinforcement Learning (RL) approaches optimize the performance of the dynamical systems by minimizing the respective infinite horizon cumulative sum of a cost function~\cite{Sutton_1998,Sen1999}. They are developed using two-step processes referred to as either value iteration or policy iteration, depending on the strategy followed to solve the underlying problem~\cite{Bert_1995,Werbos1990,Sutton_1998}. The main differences between the two are in the structures of the evaluated value functions, which in turn affects the update of the optimal policies. This is equivalent to solving the Hamilton-Jacobi-Bellman equation for the considered dynamical system.
An online policy iteration approach is used to solve dynamic graphical games in~\cite{AbouheafRV2017}. Some RL solutions are implemented using means of the adaptive critics in~\cite{Sutton_1998}. The implementations were carried out using separate actor-critic neural network structures, where the actor approximates the optimal policy, while the critic approximates the optimal value function~\cite{Widrow1973}. The analytical methodology for the critic learning process was presented in~\cite{Widrow1973}. The actor-critic learning mechanism is implemented forward-in-time and is accomplished via a dynamic learning environment, where the quality of the taken actions are evaluated using a utility function~\cite{Sutton_1998}. The adaptive critics are used to implement some optimal control problem solutions in~\cite{Werbos1989,Werbos1992,Werbos1974}. More RL approaches are used to solve a number of multi-agent optimization problems in game theoretic frameworks in~\cite{Busoniu2008,Vrancx2008,Tamimi2008,Vrabie2009}.

This work contributes with an innovative computational and mathematical framework to design model-free control schemes for a class of nonlinear processes with unknown dynamical models like flexible wing systems. The proposed control scheme has the ability to provide model-free optimal control decisions while remaining robust to the unmodeled uncertainties and disturbances. The adaptive learning scheme is able to solve the underlying model-free Bellman optimality equation (temporal difference equation) of the dynamical system in real-time. The dynamic learning system is realized using value iteration process with two supporting separate neural network structures, where gradient descent approach is used to tune the neural networks weights. A Riccati development is introduced to understand the duality between the model-based control solution and its equivalent model-free temporal difference solution.

The paper is organized as follows:
Section~\ref{sec:system-model} briefly introduces the weight shift control mechanism of a flexible wing aircraft and defines its state space representation.
Section~\ref{sec:heur-dynam-progr} discusses the optimal control formulation of the problem along with its Heuristic Dynamic Programming solution.
In Section~\ref{sec:model-free-control}, we lay out the mathematical foundation for the model-free optimal control development, leading to the online adaptive learning algorithm along with its convergence proof.
The duality between the developed model-free Bellman optimality equation and the associated Riccati formulation is explained in Section~\ref{sec:ricc-contr-solut}.
The proposed online adaptive control scheme is based on an adaptive-critic technique that is implemented through two artificial neural networks. Section~\ref{sec:adapt-crit-solut} details the design of these neural networks. 
Section~\ref{sec:results-discussion} presents different simulation scenarios to demonstrate the salient features of the proposed controller.
Finally, the manuscript is concluded with a few concluding remarks in Section~\ref{sec:conclusion}.

\section{Flexible Wing Aircraft Model}
\label{sec:system-model}
The basic principles of operation of a flexible wing aircraft are briefly explained here. The framework adopted in this manuscript is based on decoupling the flight dynamics of a flexible wing system into longitudinal and lateral frames of motion. This is based on the findings in~\cite{cook_1994,cook_spottiswoode_2005}. The resulting dynamical models are used later on to validate the performance of the adaptive learning control algorithm.

The flexible wing aircraft is characterized by the mechanical coupling of two bodies: a wing and a fuselage which is suspended below the wing through a pivot (hang strap). A schematic diagram of a flexible wing system is shown in Figure~\ref{fig:fwa}. The coupling between the two structures imposes a set of dynamical and kinematic constraints on the system. The flexible wing aircraft employs an interesting pitch-roll control mechanism by pitching and rolling the control bar. These angles are the only two signals used to control the hang glider. They are denoted by $\alpha$ (roll) and $\beta$ (pitch) in Figure~\ref{fig:fwa}. Such pitch and roll movements on the control bar lead to a relative shift between the fuselage and the wing's centers of gravity. This creates aerodynamic moments which act on the aircraft's center of gravity. The inherent flexibility of the wing results in high nonlinear aerodynamics. The dynamics of the aircraft are referred to the wing's frames of motion (longitudinal and lateral). The approximate flight dynamics are modeled at trim speeds in~\cite{cook_1994,cook_spottiswoode_2005}. The classical control approaches require a precise mathematical model of the system, which is difficult to obtain for systems with this magnitude of complexity.

\begin{figure}[htb]
	\centering
	\includegraphics[width=0.99\columnwidth]{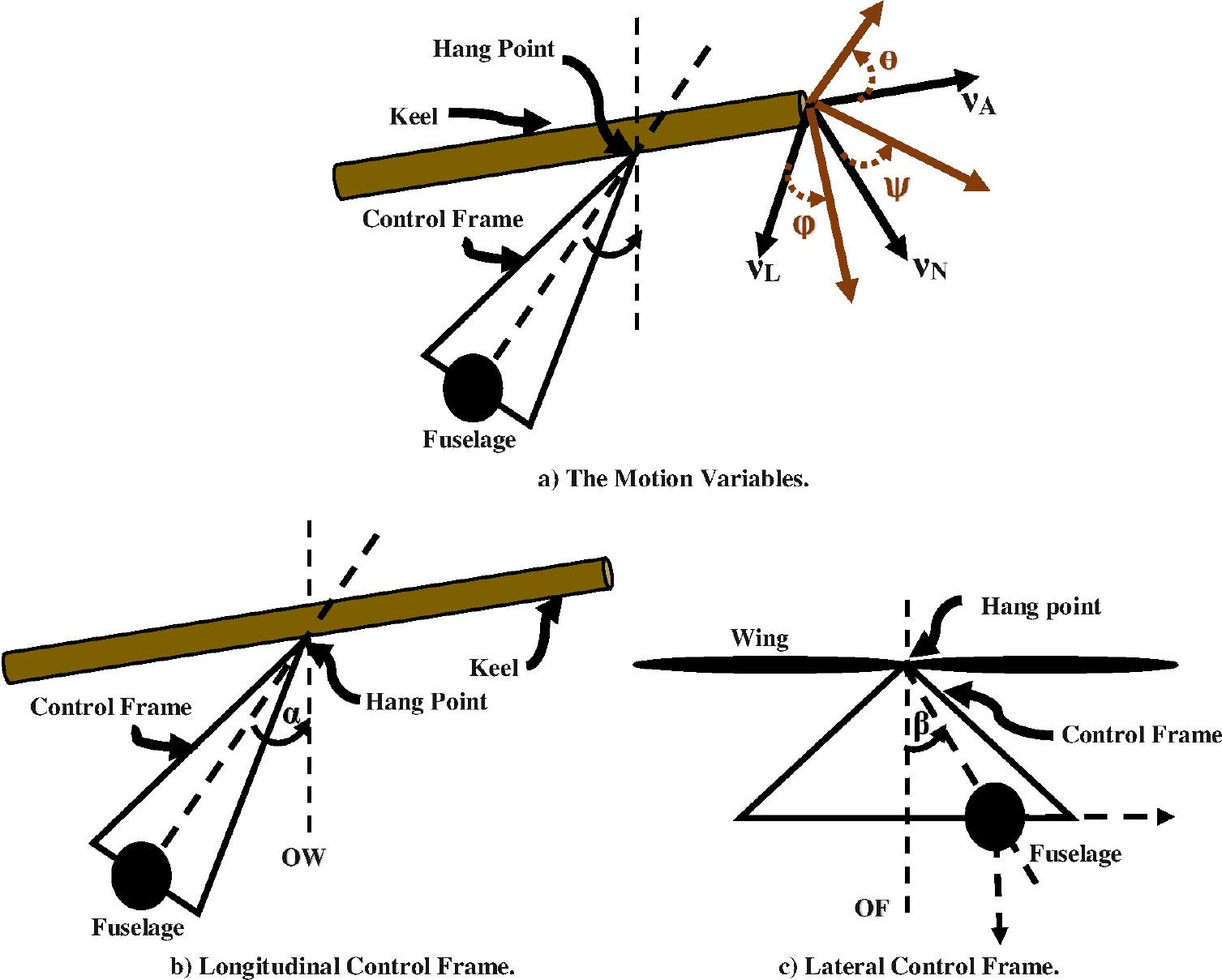}
	\caption{Schematic Diagram of the Flexible Wing System.}
	\label{fig:fwa}
\end{figure}

The following discrete-time linear model is developed for a flexible wing aircraft based on the nonlinear state space representation and the semi-experimental study conducted in~\cite{cook_spottiswoode_2005}:
\begin{eqnarray}
\boldsymbol{Z}_{k+1} = \boldsymbol{A} \boldsymbol{Z}_k  +  \boldsymbol{B}  \boldsymbol{u}_k
\label{E_02}
\end{eqnarray}
where the states along the longitudinal and lateral frames are
$\boldsymbol{Z}^{Lon} = [ \nu_A ~ \nu_N ~ \dot \theta ~ \theta ]^T$
and
$\boldsymbol{Z}^{Lat} = [ \nu_L ~ \dot \phi ~ \dot \psi ~ \phi ~ \psi ]^T$, respectively. 
The states $\theta$, $\phi$, and $\psi$, denote the glider's pitch, roll, and yaw attitude angles. The dot on top of a signal symbolizes its time derivative. The variables $\nu_A, \nu_N,$ and $\nu_L$ represent the axial, normal, and lateral linear velocities. The control signals of the longitudinal and lateral subsystems are $\boldsymbol{u}^{Lon}=\alpha$ and $\boldsymbol{u}^{Lat}=\beta$, respectively.

It is worth pointing out that this model does not account for the interaction forces between the fuselage and the wing. Also, it models the fuselage's reaction control moment as a spring-damper system. More information about the assumptions and limitations of the model can be found in~\cite{cook_spottiswoode_2005}. It is also important to articulate the fact that the above model is only used for simulating the hang glider's dynamical behavior under the control strategy to be detailed later. The latter, however, is model independent.

\section{Heuristic Dynamic Programming Solution}
\label{sec:heur-dynam-progr}
This section, shows the Heuristic Dynamic Programming solution for the underlying optimal control problem of the flexible wing aircraft. This solution is implemented online using a value iteration process, where the solution depends partially on the dynamics of the flexible wing system. The following development is used to frame the model-free adaptive learning controller.

Consider a system described by the following discrete-time state space mode:
\begin{equation}
\boldsymbol{Z}_{(k+1)}=\boldsymbol{A} \, \boldsymbol{Z}_{k} + \boldsymbol{B} \, \boldsymbol{u}_{k}, 
\label{dyn}
\end{equation}
where $k$ is the time index, $\boldsymbol{Z}\in \mathbb{R}^{n}$ is a vector of the states, $\boldsymbol{u}\in \mathbb{R}^{m}$ is a vector of the control input signals, and $\boldsymbol{A}$ and $\boldsymbol{B}$ are the system and input matrices, respectively.
A performance measure index $J = \sum_{k=0}^{\infty} U(\boldsymbol{Z}_{k},\boldsymbol{u}_{k})$ is employed to measure the quality of the taken control policies using a quadratic convex cost function $U(\boldsymbol{Z}_{k},\boldsymbol{u}_{k})= \frac{1}{2} \left(\boldsymbol{Z}^\mathrm{T}_{k} \, \boldsymbol{Q} \, \boldsymbol{Z}_{k}+ \boldsymbol{u}^\mathrm{T}_{k} \, \boldsymbol{R} \, \boldsymbol{u}_{k}\right)$, 
and 
where $\boldsymbol{R} > 0 \in \mathbb{R}^{m \times m}$ and $\boldsymbol{Q} \ge 0 \in \mathbb{R}^{n \times n}$ are positive and semi-positive definite matrices that are symmetric. Let $S(\boldsymbol{Z}_{k})$ be a value function that is quadratic in the states $\boldsymbol{Z}_{\ell}$, such that
\begin{equation}
  \label{valx}
  S(\boldsymbol{Z}_{k}) = \sum_{r=k}^{\infty} \, U(\boldsymbol{Z}_{r},\boldsymbol{u}_{r})
\end{equation}
It is possible to define Bellman equation as follows:
\begin{eqnarray}
S(\boldsymbol{Z}_{k})&=& \frac{1}{2} (\boldsymbol{Z}^\mathrm{T}_{k} \, \boldsymbol{Q} \, \boldsymbol{Z}_{k}+ \boldsymbol{u}^\mathrm{T}_{k} \, \boldsymbol{R} \, \boldsymbol{u}_{k} ) +S(\boldsymbol{Z}_{(k+1)}).
\label{bellx}
\end{eqnarray}

The optimal policies are evaluated using the Bellman optimality principles, which result in the stationarity condition (the optimal control policy $u^o$)~\cite{Lewis_2012,AbouheafCH2014,AbouheafRV2017}. This is achieved by applying the gradient of~(\ref{bellx}) with respect to the control signal $\boldsymbol{u}_k$, $S^o(\boldsymbol{Z}_{k}) = \text{argmin}_{\boldsymbol{u}_{k}} S(\boldsymbol{Z}_{k})$, which yields
\begin{equation}
\boldsymbol{u}_{k}^o=- \boldsymbol{R}^{-1} \boldsymbol{B}^\mathrm{T} \, \nabla S^o(\boldsymbol{Z}_{(k+1)}),
\label{optx}
\end{equation}
where $\nabla S^o(\boldsymbol{Z}_{k})=\partial S^o (\boldsymbol{Z}_{k}) \, / \,\partial \boldsymbol{Z}_{k}$.
Plugging the optimal policy~(\ref{optx}) in~(\ref{bellx}) leads to the Bellman optimality equation (defining the optimal value function $S^o (\dots)$),
\begin{eqnarray}
{S}^{o}\big(\boldsymbol{Z}_{k}\big)
= \frac{1}{2}\Big(\boldsymbol{Z}_{k}^{\mathrm{T}} \, \boldsymbol{Q} \, 
\boldsymbol{Z}_{k}+\boldsymbol{u}_{k}^{o{\mathrm{T}}} \, \boldsymbol{R} \, \boldsymbol{u}_{k}^{o}\Big) + {S}^{o} \big(\boldsymbol{Z}_{(k+1)}\big).
\label{obellx}
\end{eqnarray}
The solution of the Bellman optimality equation~(\ref{obellx}) can be implemented either offline or online. A value iteration process is employed here to solve the underlying Heuristic Dynamic Programming Problem as follows:
\begin{center}
  \textbf{Algorithm 1: HDP-Value Iteration Solution} 
\end{center}
\begin{enumerate}
\item Initialize $S^0(\boldsymbol{Z}_{k})$ and $\boldsymbol{u}_{k}^0$.
\item Evaluate the value $S^{(\ell+1)}(\dots)$
  \begin{eqnarray}
    S^{(\ell+1)}(\boldsymbol{Z}_{k})  = U^\ell(\boldsymbol{Z}_{k},\boldsymbol{u}_{k})+S^{\ell}(\boldsymbol{Z}_{(k+1)}),
    \label{algbell}
  \end{eqnarray}
  where $\ell$ is the iteration index.
\item Update the control action
  \begin{eqnarray}
    \footnotesize
    \boldsymbol{u}_{k}^{(\ell+1)}=- \boldsymbol{R}^{-1} \boldsymbol{B}^{\mathrm{T}} \, \nabla S^{(\ell+1)}(\boldsymbol{Z}_{(k+1)}).
    \label{algopt}
  \end{eqnarray}
\item Terminate on convergence of $\Vert S^{\ell+1}(..) - S^{\ell}(..)\Vert$.
\end{enumerate}

The weighting matrix $R$ is a positive definite matrix, which guarantees that~(\ref{optx}) and consequently~(\ref{algopt}) have non-singular values. Later on, the optimal control policy will not be a function of $R$.

This algorithm represents an online solution for the optimal control problem. However, the optimal control action~(\ref{algopt}) uses partial knowledge about the  system's dynamics (i.e., $\boldsymbol{u}^{\ell}$ depends on the input control matrix $\boldsymbol{B}$).
In the following, we will introduce a control mechanism that is able to produce model-free policies.

\section{Model-Free Control Mechanism}
\label{sec:model-free-control}
Herein, an online model-free control algorithm is proposed. First, a modified form of Bellman optimality equation~(\ref{obellx}) is obtained with a model-free policy structure. Then, a convergence analysis of the resultant adaptive learning controller is carried out.

\subsection{Modified Bellman Equation}
Consider the modified form of the value function~(\ref{valx}), which is
quadratic in the states $\boldsymbol{Z}_k$ and the control signals $u_k$. This yields the following modified Bellman equation:
\begin{eqnarray}
S(\boldsymbol{Z}_{k},\boldsymbol{u}_{k})=U(\boldsymbol{Z}_{k},\boldsymbol{u}_{k})+ S(\boldsymbol{Z}_{(k+1)},\boldsymbol{u}_{(k+1)}).
\label{bellxu}
\end{eqnarray}
Neglecting the high-order terms of Taylor's series expansion of the value function~\eqref{bellxu}, $S(\dots)$ can be approximated by
\begin{align}
  S(\boldsymbol{Z}_{k},\boldsymbol{u}_{k})
  & \approx  \frac{1}{2}
    \begin{bmatrix*}[c]
      \boldsymbol{Z}_{k} ^\mathrm{T}  & \boldsymbol{u}_{k}^\mathrm{T}
    \end{bmatrix*}
                                        \boldsymbol{M}
    \begin{bmatrix*}[c]
      \boldsymbol{Z}_{k}\\ \boldsymbol{u}_{k} 
    \end{bmatrix*} ~,
  \label{E_15}
\end{align}
where $\boldsymbol{M}
=
\begin{bmatrix*}[l]
\boldsymbol{M}_{\boldsymbol{Z}\boldsymbol{Z}} & \boldsymbol{M}_{\boldsymbol{Z}\boldsymbol{u}} \\
\boldsymbol{M}_{\boldsymbol{u} \boldsymbol{Z}}&\boldsymbol{M}_{\boldsymbol{u}\boldsymbol{u}}
\end{bmatrix*} >0 \, \in \mathbb{R}^{(n+m) \times (n+m)}.$

Similarly, the Bellman's optimality principles are applied to the difference equation~(\ref{bellxu}), or equivalently the value function~(\ref{E_15}), to obtain the optimal policy $\boldsymbol{u}_{k}^o$ and the optimal value $S^o(\dots)$.
\begin{equation}
  \label{oconxu}
  \boldsymbol{u}_{k}^o
  = \text{argmin}_{\boldsymbol{u}_{k}} S(\boldsymbol{Z}_{k},\boldsymbol{u}_{k})
  = -  \boldsymbol{M}_{ \boldsymbol{u}_{k}\boldsymbol{u}_{k}}^{-1} \cdot \boldsymbol{M}_{ \boldsymbol{u}_{k}\boldsymbol{Z}_{k}} \cdot \boldsymbol{Z}_{k}.
\end{equation}

	\begin{remark}
	The optimal policy (\ref{oconxu}) relies on the successive approximations of value function $S(\dots)$ using Bellman temporal difference equation (\ref{bellxu}) (i.e., evaluating matrix $M$) as will be explained in the following online model-free value iteration process. Later on, the convergence analysis emphasizes the positive definiteness of the obtained solutions towards the optimal solution starting with initial positive definite value function (i.e., $M^0 \, > \, 0$ ). This guarantees that the optimal policies will not have any singular values since the block matrix $\boldsymbol{M}_{\boldsymbol{u}\boldsymbol{u}}$ is positive definite at all evaluation steps.
	\end{remark}

Substituting (\ref{oconxu}) in (\ref{bellxu}) yields a model-free of Bellman optimality equation,
\begin{equation}
  \label{obellxy}
  {S}^{o}\big(\boldsymbol{Z}_{k}, \boldsymbol{u}_{k}\big)
  = \frac{1}{2}\Big(\boldsymbol{Z}_{k}^{\mathrm{T}} \boldsymbol{Q}
  \boldsymbol{Z}_{k}+\boldsymbol{u}_{k}^{o{\mathrm{T}}}  \boldsymbol{R}  \boldsymbol{u}_{k}^{o}\Big) + {S}^{o} \big(\boldsymbol{Z}_{(k+1)}, \boldsymbol{u}^o_{ (k+1)}\big).
\end{equation}

Based on this solution, a model-free control algorithm is developed to solve for the optimal value function $S^{o}(\dots)$ and the optimal strategy $\boldsymbol{u}^o$, evaluated by (\ref{oconxu}), as listed below.
\begin{center}
  \textbf{Algorithm~2: Value Iteration Control Algorithm} 
\end{center}
\begin{enumerate}
\item Initialize $S^0(\boldsymbol{Z}_{k},\boldsymbol{u}_{k}^0)$ and $\boldsymbol{u}_{k}^0$ with admissible values.
\item Update the value function $S^{(\ell+1)}(\ldots)$
  \begin{eqnarray}
    S^{(\ell+1)}(\boldsymbol{Z}_{k},\boldsymbol{u}_{k})  = U^\ell(\boldsymbol{Z}_{k},\boldsymbol{u}_{k}) +S^{\ell}(\boldsymbol{Z}_{(k+1)},\boldsymbol{u}_{(k+1)}),
    \label{abellxu}
  \end{eqnarray}
  where $\ell$ is the iteration index.
\item {Update the policy} 
  \begin{eqnarray} 
    u_{(k+1)}^{(\ell+1)}&=&- {\boldsymbol{M}}^{-1(\ell+1)}_{\boldsymbol{u} \boldsymbol{u}} {\boldsymbol{M}}^{(\ell+1)}_{\boldsymbol{u} \boldsymbol{Z}} \boldsymbol{Z}_{(k+1)}^{\ell+1}.
                          \label{aoconxu}
  \end{eqnarray}
\item Terminate on convergence of $\Vert S^{\ell+1}(..) - S^{\ell}(..)\Vert$.
\end{enumerate}

\subsection{Value Iteration Policy Convergence}
For the controller to be useful it is important to have guaranteed convergence property. This can be proven by studying the evolution of the value functions $S^\ell(\dots)$ during the online learning process of Algorithm~2. 
\begin{theorem}
  \label{thm:stability-convergence}
  Consider the dynamic system~\eqref{dyn} with the adaptive control law of Algorithm~2. Then,
  \begin{enumerate}
  \item The equilibrium point $(\boldsymbol{Z}_{k},\boldsymbol{u}_{k})=(\boldsymbol{0},\boldsymbol{0})$ is asymptotically stable.
  \item The sequence of the value function $S^{\ell}(\dots)$, for $\ell=0,1,\dots$, is monotonically ascending and converges to the optimal solution of~(\ref{obellxy}).
  \item The sequence of policies~(\ref{aoconxu}) is stabilizing.
  \end{enumerate}
\end{theorem}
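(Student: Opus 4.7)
The plan is to treat the quadratic value function $S^\ell$ — equivalently its kernel matrix $M^\ell$ — as a candidate Lyapunov function and prove the three claims in an interlocking induction on the iteration index $\ell$, exploiting the LQR-style linearity of the Bellman operator in (\ref{abellxu}) together with the Remark's preservation of $M^\ell > 0$.

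I would start with claim (2). Assuming an admissible initialization with $M^0>0$ and $S^1\ge S^0$ (e.g.\ $S^0\equiv 0$ and $u^0$ any stabilizing gain), the recursion (\ref{abellxu}) telescopes to
\begin{equation*}
S^{\ell+1}(\boldsymbol{Z}_k,\boldsymbol{u}_k) - S^\ell(\boldsymbol{Z}_k,\boldsymbol{u}_k) = S^\ell(\boldsymbol{Z}_{k+1},\boldsymbol{u}_{k+1}) - S^{\ell-1}(\boldsymbol{Z}_{k+1},\boldsymbol{u}_{k+1}),
\end{equation*}
so a straightforward induction propagates monotonicity $S^{\ell+1}\ge S^\ell$ for all $\ell$. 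For a uniform upper bound I would use the admissible initial policy: its infinite-horizon cost $\sum_{r=k}^\infty U(\boldsymbol{Z}_r,\boldsymbol{u}_r^0)$ is finite and dominates $S^\ell$. Bounded monotone convergence yields a limit $S^\infty$; passing to the limit in (\ref{abellxu}) and (\ref{aoconxu}), the limit satisfies the model-free Bellman optimality equation (\ref{obellxy}), so $S^\infty = S^o$.

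For asymptotic stability, claim (1), I would adopt $S^o$ itself as a Lyapunov function on the augmented state $(\boldsymbol{Z},\boldsymbol{u})$; positive definiteness is inherited from $M^\ell>0$ as highlighted in the Remark. Along any trajectory of (\ref{dyn}) under the converged greedy policy (\ref{aoconxu}), (\ref{obellxy}) gives
\begin{equation*}
\Delta S^o = S^o(\boldsymbol{Z}_{k+1},\boldsymbol{u}_{k+1}) - S^o(\boldsymbol{Z}_k,\boldsymbol{u}_k) = -U(\boldsymbol{Z}_k,\boldsymbol{u}_k) \le 0,
\end{equation*}
which is strictly negative whenever $\boldsymbol{u}_k\neq 0$ because $\boldsymbol{R}>0$; combining this with a detectability argument on $(\boldsymbol{A},\boldsymbol{Q}^{1/2})$ to rule out persistent motion in the set $\{\boldsymbol{u}_k=0,\ \boldsymbol{Z}_k\neq 0\}$, a discrete-time LaSalle invariance argument yields $(\boldsymbol{Z}_k,\boldsymbol{u}_k)\to(\boldsymbol{0},\boldsymbol{0})$. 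Claim (3) then follows as a corollary: since $S^\ell>0$ strictly decreases along the closed-loop trajectories generated by (\ref{aoconxu}) and is bounded below, the trajectories must converge to the origin, which is exactly the stabilizing property.

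The main obstacle will be controlling the interplay between the Bellman evaluation and the greedy policy update during the monotonicity induction. Because $\boldsymbol{u}_{k+1}$ on the right-hand side of (\ref{abellxu}) is computed from the freshly updated matrix $M^{\ell+1}$, the naive chain of inequalities needs to be justified — most cleanly by exploiting that the greedy step minimizes the Q-form (\ref{E_15}) over $\boldsymbol{u}$, so the policy improvement can only push $S$ downward relative to the old policy, preserving the ascending base case; alternatively, one may substitute (\ref{aoconxu}) into (\ref{abellxu}) to obtain a closed Riccati-type recursion in $M^\ell$ and work directly with matrix monotonicity. A closely related subtlety is the existence of the admissible initialization $(S^0,\boldsymbol{u}^0)$, which tacitly assumes stabilizability of the pair $(\boldsymbol{A},\boldsymbol{B})$.
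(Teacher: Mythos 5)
Your proposal follows essentially the same route as the paper's proof: the quadratic value function serves as the Lyapunov candidate with $\Delta S = -U(\boldsymbol{Z}_k,\boldsymbol{u}_k)$ along closed-loop trajectories, and convergence is obtained by telescoping the value-iteration recursion~(\ref{abellxu}) into a bounded monotone sequence, with the stabilizing property of the policy sequence then read off from the resulting chain of inequalities. The only substantive differences are refinements on your side — you invoke detectability and a LaSalle argument where the paper simply asserts strict decrease of $S$, and you bound the sequence by the initial admissible policy's infinite-horizon cost rather than by the stability result of part~1 — but the decomposition and the key identities are the same.
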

\begin{proof}
\begin{enumerate}
\item
The solving value function $S(\dots)$, which takes the form (\ref{E_15}), is a Lyapunov candidate function. This function is continuously differentiable with respect to $\boldsymbol{Z}$ and $\boldsymbol{u}$, where  $S(\boldsymbol{Z}_{k}=\boldsymbol{0},\boldsymbol{u}_{k}=\boldsymbol{0})=0$,
 $S(\boldsymbol{Z}_{k},\boldsymbol{u}_{k})=\sum_{r=k}^{\infty} \, U(\boldsymbol{Z}_{r},\boldsymbol{u}_{r}) > 0$, $\forall (\boldsymbol{Z}_{k},\boldsymbol{u}_{k}) \neq (\boldsymbol{0},\boldsymbol{0})$.

The policy~(\ref{aoconxu}) is stationary with the following value
$$S(\boldsymbol{Z}_{k})=\boldsymbol{Z}_{k}^T \left(\boldsymbol{M}_{\boldsymbol{Z}_{k}\boldsymbol{Z}_{k}}-\boldsymbol{M}_{\boldsymbol{Z}_{k}\boldsymbol{u}_{k}} \boldsymbol{M}_{\boldsymbol{u}_{k}\boldsymbol{u}_{k}}^{-1} \boldsymbol{M}_{\boldsymbol{u}_{k}\boldsymbol{Z}_{k}} \right) \boldsymbol{Z}_{k}>0.$$
Schur complement of the symmetric matrix $\boldsymbol{M}_{\boldsymbol{u}_{k}\boldsymbol{u}_{k}}$ is positive definite such that  $\left(\boldsymbol{M}_{\boldsymbol{Z}_{k}\boldsymbol{Z}_{k}}-\boldsymbol{M}_{\boldsymbol{Z}_{k}\boldsymbol{u}_{k}} \boldsymbol{M}_{\boldsymbol{u}_{k}\boldsymbol{u}_{k}}^{-1} \boldsymbol{M}_{\boldsymbol{u}_{k}\boldsymbol{Z}_{k}} \right)>0$. This means that, block matrices $\boldsymbol{M}_{\boldsymbol{Z}_{k}\boldsymbol{Z}_{k}}$ and $\boldsymbol{M}_{\boldsymbol{u}_{k}\boldsymbol{u}_{k}}$ are also positive definite. Hence, the hessians of $S(\boldsymbol{Z}_k,\boldsymbol{u}_k)$ with respect to  $\boldsymbol{Z}_k$ and $\boldsymbol{u}_k$ are given by $\displaystyle \frac{\partial^2 S(\boldsymbol{Z}_k,\boldsymbol{u}_uk)}{\partial \boldsymbol{Z}^2_k}=\boldsymbol{M}_{\boldsymbol{Z}_{k}\boldsymbol{Z}_{k}}>0$ and $\displaystyle \frac{\partial^2 S(\boldsymbol{Z}_k,\boldsymbol{u}_k)}{\partial \boldsymbol{u}^2_k}=\boldsymbol{M}_{\boldsymbol{u}_{k}\boldsymbol{u}_{k}}>0$ respectively. Therefore, $(\boldsymbol{0},\boldsymbol{0})$ is an equilibrium point for the convex function $S(\dots)$.  Since the following holds 
\begin{eqnarray}
S(\boldsymbol{Z}_{(k+1)},\boldsymbol{u}_{(k+1)})-S(\boldsymbol{Z}_{k},\boldsymbol{u}_{k}) = \nonumber \\ -U\big(\boldsymbol{Z}_{k},\boldsymbol{u}_{k}\big) < 0,\forall (\boldsymbol{Z}_{k},\boldsymbol{u}_{k}) \neq (\boldsymbol{0},\boldsymbol{0}).
\end{eqnarray}
Therefore,  $S(..)$ is a Lyapunov function and the equilibrium point $(\boldsymbol{0},\boldsymbol{0})$ is asymptotically stable.
\item
Let the initial value function $S^0 \geq 0$ be upper bounded above by a constant $C$. Applying an arbitrary policy $\boldsymbol{\mu}$ yields 
\begin{equation}
0 \le S^0(\boldsymbol{Z}_{k},\boldsymbol{\tilde  u}) \le S^0(\boldsymbol{Z}_{k},\boldsymbol{\mu}_{k}) \le C,
\label{ins}
\end{equation}
where $ \boldsymbol{\tilde u}$ is a policy satisfying~(\ref{aoconxu}).

Using a stabilizing policy $\boldsymbol{u}$ and the iterative process (\ref{abellxu}) we get
\begin{multline*}
  S^{\ell+1}\big(\boldsymbol{Z}_{k},\boldsymbol{u}\big)-S^{\ell}\big(\boldsymbol{Z}_{k},\boldsymbol{u}\big)=S^{\ell}\big(\boldsymbol{Z}_{(k+1)},\boldsymbol{u}\big) \\
  +\dfrac{1}{2}\Big(\boldsymbol{Z}_{k}^{\mathrm{T}}\boldsymbol{Q}\boldsymbol{Z}_{k}+\boldsymbol{u}^{{\mathrm{T}}} \boldsymbol{R}
  \boldsymbol{u}\Big)-S^{\ell-1}\big(\boldsymbol{Z}_{(k+1)},\boldsymbol{u}\big) \\
  -\dfrac{1}{2}\Big(\boldsymbol{Z}_{k}^{\mathrm{T}}\boldsymbol{Q}\boldsymbol{Z}_{k}
+\boldsymbol{u}^{{\mathrm{T}}}\boldsymbol{R}\boldsymbol{u}\Big) .
\end{multline*}
This equality can be extended into the following arrangement:
\begin{align*}
&S^{\ell+1}\big(\boldsymbol{Z}_{k},\boldsymbol{u}\big)-S^{\ell}\big(\boldsymbol{Z}_{k},\boldsymbol{u}\big) \\
&=
S^{\ell}\big(\boldsymbol{Z}_{(k+1)},\boldsymbol{u}\big)-S^{\ell-1}\big(\boldsymbol{Z}_{(k+1)},\boldsymbol{u}\big) \\
&=
S^{(\ell-1)}\big(\boldsymbol{Z}_{(k+2)},\boldsymbol{u}\big)-S^{\ell-2}\big(\boldsymbol{Z}_{(k+2)},\boldsymbol{u}\big) \\
&=
S^{(\ell-2)}\big(\boldsymbol{Z}_{(k+3)},\boldsymbol{u}\big)-S^{\ell-3}\big(\boldsymbol{Z}_{(k+3)},\boldsymbol{u}\big) \\
&=\ \ \ \ \vdots \\
&= S^{1}\big(\boldsymbol{Z}_{(k+\ell)}, \boldsymbol{u}\big)
-S^{0}\big(\boldsymbol{Z}_{(k+\ell)}, \boldsymbol{u}\big).
\end{align*}
Rearranging this equality in a finite summation form yields
\begin{align*}
  &S ^{\ell+1}\big(\boldsymbol{Z}_ {k},\boldsymbol{u} \big) \\
  &=S ^{1}\big(\boldsymbol{Z}_{(k+\ell)}, \boldsymbol{u}\big)
-S ^{0}\big(\boldsymbol{Z}_{(k+\ell)}, \boldsymbol{u}\big)
+S ^{\ell}\big(\boldsymbol{Z}_ {k},\boldsymbol{u} \big) \\
&=S ^{1}\big(\boldsymbol{Z}_{(k+\ell)}, \boldsymbol{u}\big) -S ^{0}\big(\boldsymbol{Z}_{(k+\ell)}, \boldsymbol{u}\big) \\
  &+S ^{1}\big(\boldsymbol{Z}_{(k+\ell-1)}, \boldsymbol{u}\big)-S ^{0}\big(\boldsymbol{Z}_{(k+\ell-1)}, \boldsymbol{u}\big)
+S ^{\ell-1}\big(\boldsymbol{Z}_ {k},\boldsymbol{u} \big) \\
&=S ^{1}\big(\boldsymbol{Z}_{(k+\ell)}, \boldsymbol{u}\big)
-S ^{0}\big(\boldsymbol{Z}_{(k+\ell)}, \boldsymbol{u}\big) \\  &+S ^{1}\big(\boldsymbol{Z}_{(k+\ell-1)}, \boldsymbol{u}\big)
-S ^{0}\big(\boldsymbol{Z}_{(k+\ell-1)}, \boldsymbol{u}\big) \\ &+S ^{1}\big(\boldsymbol{Z}_{(k+\ell-2)}, \boldsymbol{u}\big)-S ^{0}\big(\boldsymbol{Z}_{(k+\ell-2)}, \boldsymbol{u}\big)
+S ^{\ell-2}\big(\boldsymbol{Z}_ {k},\boldsymbol{u} \big) \\
&=\sum_{n=0}^{\ell}S ^{1}\big(\boldsymbol{Z}_{(k+n)},\boldsymbol{u} \big)
-\sum_{n=0}^{\ell} S ^{0}\big(\boldsymbol{Z}_{(k+n)},\boldsymbol{u} \big)+S ^{0}\big(\boldsymbol{Z}_ {k},\boldsymbol{u} \big) \\
&=\sum_{n=0}^{\ell}S ^{1}\big(\boldsymbol{Z}_{(k+n)},\boldsymbol{u} \big)
-\sum_{n=1}^{\ell} S ^{0}\big(\boldsymbol{Z}_{(k+n)},\boldsymbol{u} \big).
\end{align*}
Using the utility function $U$ in this summation yields
\begin{align}
S ^{\ell+1}\big(\boldsymbol{Z}_ {k},\boldsymbol{u} \big)
=&\
S ^{0}\big(\boldsymbol{Z}_{(k+\ell+1)},\boldsymbol{u} \big) \nonumber \\
+&\
\sum_{n=0}^{\ell}\dfrac{1}{2}\Big(\boldsymbol{Z}_{(k+n)}^{\mathrm{T}} \, \boldsymbol{Q} \, \boldsymbol{Z}_{(k+n)}+\boldsymbol{u} ^{{\mathrm{T}}} \, \boldsymbol{R} \, \boldsymbol{u} \Big).
  \nonumber \\
   \leq&\
S ^{0}\big(\boldsymbol{Z}_{(k+\ell+1)},\boldsymbol{u} \big) \nonumber \\
  +&\ \sum_{n=0}^{\infty}\dfrac{1}{2} \Big(\boldsymbol{Z}_{(k+n)}^{\mathrm{T}} \boldsymbol{Q}  \boldsymbol{Z}_{(k+n)} + \boldsymbol{u} ^{{\mathrm{T}}} \boldsymbol{R}  \boldsymbol{u} \Big).
     \label{sum}
\end{align}

Using the stability result of part~1, we deduce that the summation $\sum_{n=0}^{\infty} U\big(\boldsymbol{Z}_{(k+n)},\boldsymbol{u} \big)$ is upper bounded by a constant $\bar C$. Then, (\ref{ins}) and~(\ref{sum}) lead to,
$0 \leq S^{\ell+1}\big(\boldsymbol{Z}_ {k},\boldsymbol{u} \big) \leq C+\bar C$, $\forall k,\ell$.
Therefore,
\begin{align} 
0 \le \dots \le S ^{0} \le S ^{1} \le \dots \le S^{\ell} \dots \le S ^{*} = \lim_{\ell\to\infty} S^{\ell}.
\label{ineq}
\end{align}
Hence, the sequence $S^{\ell}$, $\ell=0,1,\ldots$, converges to an optimal value function $S^*$ which is the optimal solution of the modified model-free Bellman optimality equation~(\ref{obellxy}).
\item
  Inequality~(\ref{ineq}) and the sequence of policies~(\ref{aoconxu}) lead to
\begin{multline} 
  0 \le S ^{0}(\boldsymbol{Z}_ {k},\boldsymbol{u}^0 ) \le S ^{1}(\boldsymbol{Z}_ {k},\boldsymbol{u}^1 ) \le \dots \\
  \le S ^{\ell}(\boldsymbol{Z}_ {k},\boldsymbol{u}^\ell ) \dots \le S ^{*}(\boldsymbol{Z}_ {k},\boldsymbol{u}^* ) .
\label{ineq1}
\end{multline}
Thus,
\begin{equation}
0 \le S^\ell(\boldsymbol{Z}_ {k},\boldsymbol{u}^o_ {k}) \le S^\ell(\boldsymbol{Z}_ {k},\boldsymbol{u}_ {k}), \forall \ell.
\label{ineq2}
\end{equation}
for some arbitrary stabilizing policy $\boldsymbol{u}$ and optimal policy $\boldsymbol{u}^o$.

Inequalities~(\ref{ineq1}) and~(\ref{ineq2}) guarantee that the sequence of policies  $\boldsymbol{u}^\ell_ {k}$, $\ell=0,1,\ldots$, is stabilizing and convergent to the optimal policy $\boldsymbol{u}^* = - \boldsymbol{M}_{\boldsymbol{u} \boldsymbol{u} }^{-1*} \cdot \boldsymbol{M}^*_{\boldsymbol{u} \boldsymbol{Z} } \cdot \boldsymbol{Z}_ {k}$, where
$S^\ast(\boldsymbol{Z}_{k},\boldsymbol{u}_{k}) = \frac{1}{2} [\boldsymbol{Z}_{k} ^\mathrm{T} ~  \boldsymbol{u}_{k}^\mathrm{T}] \boldsymbol{M}^\ast 
\begin{bmatrix}
\boldsymbol{Z}_{k}\\
\boldsymbol{u}_{k}
\end{bmatrix}$
is the optimal value function and 
$\boldsymbol{M}^\ast =
\begin{bmatrix*} 
\boldsymbol{M}^\ast_{\boldsymbol{Z}\boldsymbol{Z}} & \boldsymbol{M}^\ast_{\boldsymbol{Z}\boldsymbol{u}}\\
\boldsymbol{M}^\ast_{\boldsymbol{u} \boldsymbol{Z}}&\boldsymbol{M}^\ast_{\boldsymbol{u}\boldsymbol{u}}
\end{bmatrix*}$
is the optimal gain.
\end{enumerate}
\end{proof}

\section{Riccati Control Solution}
\label{sec:ricc-contr-solut}
The next development lays out the mathematical foundation for the Riccati solution for the underlying optimal control problem. This solution is equivalent to solving ``recursively'' the modified Bellman optimality equation~(\ref{obellxy}) using the model-free policies~(\ref{oconxu}). 
\begin{theorem}
Let the value $S(\boldsymbol{Z}_{k},\boldsymbol{u}_{k})= \frac{1}{2} [\boldsymbol{Z}_{k} ^T ~ \boldsymbol{u}_{k}^T] \cdot \boldsymbol{\Psi} \cdot 
\begin{bmatrix}
\boldsymbol{Z}_{k}\\
\boldsymbol{u}_{k}
\end{bmatrix}$ be the value iteration function used to solve the Bellman optimality condition~(\ref{obellxy}) following policies~(\ref{oconxu}), where $\boldsymbol{\Psi}$ is a Riccati gain matrix of the following form:
\begin{equation*}
  \boldsymbol{\Psi} =
  \begin{bmatrix}
    \boldsymbol{\Psi}_{\boldsymbol{Z}\boldsymbol{Z}}&\boldsymbol{\Psi}_{\boldsymbol{Z}\boldsymbol{u}}\\
    \boldsymbol{\Psi}_{\boldsymbol{u} \boldsymbol{Z}}&\boldsymbol{\Psi}_{\boldsymbol{u}\boldsymbol{u}}
  \end{bmatrix}
\end{equation*}
Also, let
$\boldsymbol{\hat\Psi}= {\boldsymbol{\Psi}}^{-1}_{\boldsymbol{u} \boldsymbol{u}} {\boldsymbol{\Psi}}_{\boldsymbol{u} \boldsymbol{Z}}$
and
$\boldsymbol{\tilde \Psi}= \boldsymbol{\Psi}_{\boldsymbol{Z} \boldsymbol{Z}}- {\boldsymbol{\Psi}}^{-1}_{\boldsymbol{u} \boldsymbol{u}}  {\boldsymbol{\Psi}}_{\boldsymbol{u} \boldsymbol{Z}}$.
Then, the Riccati solution can be written in the following recursive expression:
\begin{eqnarray}
  \boldsymbol{\Psi}^{\ell+1} = \left[\begin{array}{ll}
\boldsymbol{A}^{\mathrm{T}}  \boldsymbol{\tilde \Psi}^{\ell} \boldsymbol{A}+ \boldsymbol{\hat\Psi}^{\ell\mathrm{T}} \, \boldsymbol{R} \,  \boldsymbol{\hat\Psi}^\ell+\boldsymbol{Q}  \qquad & \boldsymbol{A}^\mathrm{T}  \boldsymbol{\tilde\Psi}^\ell \boldsymbol{B} \\
\boldsymbol{B}^\mathrm{T}  \boldsymbol{\tilde\Psi}^\ell \boldsymbol{A}  & \boldsymbol{B}^\mathrm{T}  \boldsymbol{\tilde \Psi}^\ell \boldsymbol{B}
\end{array}\right].
\label{ric}
\end{eqnarray}
\end{theorem}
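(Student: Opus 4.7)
The plan is to substitute the quadratic ansatz for $S^{\ell}$ given in the theorem statement into the value-iteration Bellman recursion~(\ref{abellxu}), to reduce its right-hand side to a single quadratic form in the free variables $(\boldsymbol{Z}_{k},\boldsymbol{u}_{k})$, and then to read off the resulting block-matrix kernel as $\boldsymbol{\Psi}^{\ell+1}$. Two substitutions drive the reduction: the dynamics $\boldsymbol{Z}_{(k+1)}=\boldsymbol{A}\boldsymbol{Z}_{k}+\boldsymbol{B}\boldsymbol{u}_{k}$ eliminate $\boldsymbol{Z}_{(k+1)}$, and the previous-iterate feedback law $\boldsymbol{u}_{(k+1)}^{\ell}=-\hat{\boldsymbol{\Psi}}^{\ell}\,\boldsymbol{Z}_{(k+1)}$ prescribed by~(\ref{aoconxu}) eliminates $\boldsymbol{u}_{(k+1)}$.

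The key algebraic step is to collapse $S^{\ell}(\boldsymbol{Z}_{(k+1)},\boldsymbol{u}_{(k+1)}^{\ell})$ to a pure quadratic in $\boldsymbol{Z}_{(k+1)}$. Substituting $\boldsymbol{u}_{(k+1)}^{\ell}=-\hat{\boldsymbol{\Psi}}^{\ell}\,\boldsymbol{Z}_{(k+1)}$ into the ansatz and expanding produces four terms; using the symmetry $\boldsymbol{\Psi}_{\boldsymbol{u}\boldsymbol{Z}}^{\ell}=(\boldsymbol{\Psi}_{\boldsymbol{Z}\boldsymbol{u}}^{\ell})^{T}$ and the definition $\hat{\boldsymbol{\Psi}}^{\ell}=(\boldsymbol{\Psi}_{\boldsymbol{u}\boldsymbol{u}}^{\ell})^{-1}\boldsymbol{\Psi}_{\boldsymbol{u}\boldsymbol{Z}}^{\ell}$, the two cross terms combine and the quartic term simplifies via $\hat{\boldsymbol{\Psi}}^{\ell\,T}\boldsymbol{\Psi}_{\boldsymbol{u}\boldsymbol{u}}^{\ell}\hat{\boldsymbol{\Psi}}^{\ell}=(\boldsymbol{\Psi}_{\boldsymbol{u}\boldsymbol{Z}}^{\ell})^{T}(\boldsymbol{\Psi}_{\boldsymbol{u}\boldsymbol{u}}^{\ell})^{-1}\boldsymbol{\Psi}_{\boldsymbol{u}\boldsymbol{Z}}^{\ell}$, so the whole bracket collapses to the Schur complement $\boldsymbol{\Psi}_{\boldsymbol{Z}\boldsymbol{Z}}^{\ell}-\boldsymbol{\Psi}_{\boldsymbol{Z}\boldsymbol{u}}^{\ell}(\boldsymbol{\Psi}_{\boldsymbol{u}\boldsymbol{u}}^{\ell})^{-1}\boldsymbol{\Psi}_{\boldsymbol{u}\boldsymbol{Z}}^{\ell}$, which is $\tilde{\boldsymbol{\Psi}}^{\ell}$ (up to the apparent typographical omission of $\boldsymbol{\Psi}_{\boldsymbol{Z}\boldsymbol{u}}^{\ell}$ in its definition). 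Plugging $\boldsymbol{Z}_{(k+1)}=\boldsymbol{A}\boldsymbol{Z}_{k}+\boldsymbol{B}\boldsymbol{u}_{k}$ then distributes $\tilde{\boldsymbol{\Psi}}^{\ell}$ across a block matrix with entries $\boldsymbol{A}^{T}\tilde{\boldsymbol{\Psi}}^{\ell}\boldsymbol{A}$, $\boldsymbol{A}^{T}\tilde{\boldsymbol{\Psi}}^{\ell}\boldsymbol{B}$, $\boldsymbol{B}^{T}\tilde{\boldsymbol{\Psi}}^{\ell}\boldsymbol{A}$ and $\boldsymbol{B}^{T}\tilde{\boldsymbol{\Psi}}^{\ell}\boldsymbol{B}$, producing the propagation portion of the claimed recursion.

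The remaining ingredient is the utility $U^{\ell}$. Reading the $\ell$-superscript as \emph{evaluated along the previous policy} $\boldsymbol{u}_{k}^{\ell}=-\hat{\boldsymbol{\Psi}}^{\ell}\,\boldsymbol{Z}_{k}$ in the $\boldsymbol{R}$-penalty yields $U^{\ell}=\tfrac{1}{2}\boldsymbol{Z}_{k}^{T}(\boldsymbol{Q}+\hat{\boldsymbol{\Psi}}^{\ell\,T}\boldsymbol{R}\,\hat{\boldsymbol{\Psi}}^{\ell})\boldsymbol{Z}_{k}$, which lands entirely in the $(\boldsymbol{Z}_{k},\boldsymbol{Z}_{k})$ block and supplies the extra $\hat{\boldsymbol{\Psi}}^{\ell\,T}\boldsymbol{R}\,\hat{\boldsymbol{\Psi}}^{\ell}+\boldsymbol{Q}$ seen in~(\ref{ric}). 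Summing the two quadratic contributions and matching coefficients against the ansatz for $S^{\ell+1}$ identifies $\boldsymbol{\Psi}^{\ell+1}$ block-by-block. I expect the main obstacle to be the bookkeeping around the $\ell$-superscripts---carefully distinguishing the free arguments $(\boldsymbol{Z}_{k},\boldsymbol{u}_{k})$ of $S^{\ell+1}$ from the previous-iterate policy used to close both $S^{\ell}$ at step $k+1$ and $U^{\ell}$ at step $k$---together with the Schur-complement algebra, whose validity rests on the symmetry of $\boldsymbol{\Psi}^{\ell}$ and the positive-definiteness of $\boldsymbol{\Psi}_{\boldsymbol{u}\boldsymbol{u}}^{\ell}$ guaranteed by the preceding Remark so that $\hat{\boldsymbol{\Psi}}^{\ell}$ is well-defined at every iteration.
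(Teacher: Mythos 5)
Your proposal follows essentially the same route as the paper's proof: substitute the feedback law $\boldsymbol{u}_{(k+1)}=-\hat{\boldsymbol{\Psi}}^{\ell}\boldsymbol{Z}_{(k+1)}$ to collapse $S^{\ell}$ at step $k+1$ to the pure quadratic $\frac{1}{2}\boldsymbol{Z}_{(k+1)}^{\mathrm{T}}\tilde{\boldsymbol{\Psi}}^{\ell}\boldsymbol{Z}_{(k+1)}$, insert the dynamics $\boldsymbol{Z}_{(k+1)}=\boldsymbol{A}\boldsymbol{Z}_{k}+\boldsymbol{B}\boldsymbol{u}_{k}$, and match block coefficients against the quadratic ansatz, with the $\boldsymbol{Q}+\hat{\boldsymbol{\Psi}}^{\ell\mathrm{T}}\boldsymbol{R}\,\hat{\boldsymbol{\Psi}}^{\ell}$ contribution landing in the $(\boldsymbol{Z},\boldsymbol{Z})$ block exactly as in the paper's equation~(\ref{E_31}). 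You also correctly note that the stated definition of $\tilde{\boldsymbol{\Psi}}$ omits the factor $\boldsymbol{\Psi}_{\boldsymbol{Z}\boldsymbol{u}}$ and should be the Schur complement $\boldsymbol{\Psi}_{\boldsymbol{Z}\boldsymbol{Z}}-\boldsymbol{\Psi}_{\boldsymbol{Z}\boldsymbol{u}}\boldsymbol{\Psi}_{\boldsymbol{u}\boldsymbol{u}}^{-1}\boldsymbol{\Psi}_{\boldsymbol{u}\boldsymbol{Z}}$, which is the form the paper itself uses in the proof of Theorem~\ref{thm:stability-convergence}.
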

\begin{proof}
Using the above definition of the value function $S(\dots)$, the optimal policy may be expressed as
\begin{equation}
\boldsymbol{u}_{k}^o=-\, {\boldsymbol{\Psi}}^{-1}_{\boldsymbol{u} \boldsymbol{u}} \, {\boldsymbol{\Psi}}_{\boldsymbol{u} \boldsymbol{Z}} \, \boldsymbol{Z}_{k}=- \boldsymbol{\hat \Psi} \boldsymbol{Z}_{k} .
\label{locpolicy}
\end{equation}
The value of function $S(\dots)$ at time-step $k$ is 
\begin{equation}
{S} \big(\boldsymbol{Z}_{(k+1)}, \boldsymbol{u}_{(k+1)}\big)= \frac{1}{2} \boldsymbol{Z}^\mathrm{T}_{(k+1)}  \boldsymbol{\tilde\Psi} \boldsymbol{Z}_{(k+1)} .
\label{SL1}
\end{equation}
Using~(\ref{dyn}), (\ref{locpolicy}), and~(\ref{SL1}) in~(\ref{obellxy}), yields
\begin{multline}\label{E_31}
{S}\big(\boldsymbol{Z}_{k}, \boldsymbol{u}_{k}\big)
= \frac{1}{2}\Big(\boldsymbol{Z}_{k}^{\mathrm{T}} \, \boldsymbol{Q} \, \boldsymbol{Z}_{k} + \boldsymbol{Z}_{k}^{{\mathrm{T}}}  \boldsymbol{\hat\Psi}^\mathrm{T} \, \boldsymbol{R} \,  \boldsymbol{\hat\Psi} \, \boldsymbol{Z}_{k} \Big)  \\
+  \frac{1}{2} \left( \boldsymbol{Z}^\mathrm{T}_{k}\boldsymbol{A}^\mathrm{T}  \boldsymbol{\tilde\Psi} \boldsymbol{A} \boldsymbol{Z}_{k}+\boldsymbol{u}^\mathrm{T}_{k}\boldsymbol{B}^\mathrm{T}  \boldsymbol{\tilde\Psi} \boldsymbol{B} \boldsymbol{u}_{k} \right. \\
+ \left. \boldsymbol{u}^\mathrm{T}_{k} \boldsymbol{B}^\mathrm{T}  \boldsymbol{\tilde\Psi} \boldsymbol{A} \boldsymbol{Z}_{k}+ \boldsymbol{Z}^\mathrm{T}_{k}\boldsymbol{A}^\mathrm{T} \boldsymbol{\tilde \Psi} \boldsymbol{B} \boldsymbol{u}_{k}\right) .
\end{multline}
This equation can be rearranged as
\begin{multline*}
[\boldsymbol{Z}_{k} ^T ~  \boldsymbol{u}_{k}^T]  \boldsymbol{\Psi} 
\begin{bmatrix}
\boldsymbol{Z}_{k}\\
\boldsymbol{u}_{k}
\end{bmatrix}
= \\
[\boldsymbol{Z}_{k} ^T ~  \boldsymbol{u}_{k}^T]
\begin{bmatrix*}[l]
  \boldsymbol{Q}+\boldsymbol{\hat \Psi}^\mathrm{T} \, \boldsymbol{R} \,  \boldsymbol{\hat\Psi}+\boldsymbol{A}^\mathrm{T}  \boldsymbol{\tilde\Psi} \boldsymbol{A}  & \boldsymbol{A}^\mathrm{T}  \boldsymbol{\tilde\Psi} \boldsymbol{B} \\
  \boldsymbol{B}^\mathrm{T}  \boldsymbol{\tilde\Psi} \boldsymbol{A}  & \boldsymbol{B}^\mathrm{T}  \boldsymbol{\tilde\Psi} \boldsymbol{B}
\end{bmatrix*}
\begin{bmatrix}
\boldsymbol{Z}_{k}\\
\boldsymbol{u}_{k}
\end{bmatrix}.
\end{multline*}
Therefore, a Riccati equation can be formulated as
\begin{equation*}
  \boldsymbol{\Psi} =
  \begin{bmatrix*}[l]
          \boldsymbol{Q}+ \boldsymbol{\hat\Psi}^\mathrm{T} \, \boldsymbol{R} \,  \boldsymbol{\hat\Psi}+\boldsymbol{A}^\mathrm{T}  \boldsymbol{\tilde\Psi} \boldsymbol{A}  & \boldsymbol{A}^\mathrm{T}  \boldsymbol{\tilde\Psi} \boldsymbol{B}\\
          \boldsymbol{B}^\mathrm{T} \boldsymbol{\tilde \Psi} \boldsymbol{A}  & \boldsymbol{B}^\mathrm{T}  \boldsymbol{\tilde\Psi} \boldsymbol{B}
        \end{bmatrix*} .
\end{equation*}
This solution form is equivalent to solving the underlying modified Bellman equation~(\ref{obellxy}) using the model-free policies~(\ref{oconxu}), which results in the recursive solution form~(\ref{ric}).
\end{proof}

\section{Actor-Critic Design}
\label{sec:adapt-crit-solut}
In order to apply Algorithm~2 to solve the model-free optimality equation~(\ref{obellxy}) using the model-free policies~(\ref{oconxu}), it is necessary to approximate the value function ${S^o} \big(\boldsymbol{Z}_{k}, \boldsymbol{u}^o_{k}\big)$ and the optimal policy $\boldsymbol{u}^o_k$. This is accomplished through an actor-critic neural network structure. The critic approximates the value function while the actor approximates the optimal policy. Each of the two structures is implemented as a single-layer feedforward neural network where the weights are tuned in real-time using a gradient descent approach.

Since the model-free optimal control policy is a function of ${\boldsymbol{Z}}_k$, a policy may be approximated by the actor's output as
$\boldsymbol{\hat u}_{k}= {\boldsymbol{W}}_a^\mathrm{T} {\boldsymbol{Z}}_k$,
where the matrix ${\boldsymbol{W}}_a^\mathrm{T} \in \mathbb{R}^{n \times m}$ represents the weights of the actor neural network.
The value function is quadratic in the states and the model-free control policy. Then, a similar critic structure can be proposed, such that
$\hat S({\boldsymbol{Z}}_{k},\boldsymbol{\hat u}_{k})= \frac{1}{2} {\boldsymbol{\chi}}^\mathrm{T}_k  {\boldsymbol{W}}_c^\mathrm{T}  {\boldsymbol{\chi}}_k$,
where the matrix ${\boldsymbol{W}}_c^\mathrm{T} \in \mathbb{R}^{(n+m) \times (n+m)}$ represents the weights of the critic's neural network and ${\boldsymbol{\chi}}^\mathrm{T}_k=[{\boldsymbol{Z}}_{k} ^\mathrm{T} ~  \boldsymbol{\hat u}_{k}^\mathrm{T}]$.

The actor's approximation error is
${\delta}_{\boldsymbol{W}_a}=\boldsymbol{\hat u}_{k}-\boldsymbol{u}_{k}^{target}$,
where the target value of the mode-free optimal policy is
\begin{equation}
\boldsymbol{u}_{k}^{target} = -  \left[(\boldsymbol{W}^T_{c\boldsymbol{\hat u}_{k}\boldsymbol{\hat u}_{k}})^{-1} ~ {{\boldsymbol{W}}_c}^T_{\boldsymbol{\hat u}_{k}{\boldsymbol{Z}}_{k}}\right] \cdot  {\boldsymbol{Z}}_{k} .
\label{E_udesired}
\end{equation}
Following a gradient descent approach to tune the weights leads to the following weight adaptation law:
\begin{equation*}
{\boldsymbol{W}}_a^{(\ell+1)\mathrm{T}} = {\boldsymbol{W}}_a^{(\ell)\mathrm{T}}-\eta_{{W}_a}  \left({ \boldsymbol{\hat u}}^{(\ell)}_{k}- \boldsymbol{u}_{k}^{(\ell)desired}\right)  {\boldsymbol{Z}}^\mathrm{T}_k ,
\end{equation*}
where $0<\eta_{{W}_a}<1$ is an actor's learning rate.

Similarly, the critic's approximation error is $\delta_{{W}_c}=\hat S_{k}-S_k^{target}$,
where the target value function is
\begin{equation*}
S_k^{target}= \frac{1}{2}\left({\boldsymbol{Z}}_{k}^{\mathrm{T}}  {\boldsymbol{Q}} {\boldsymbol{Z}}_{k} + \boldsymbol{\hat u}^\mathrm{T}_k  {\boldsymbol{R}} \boldsymbol{\hat u}_k\right)+\hat S({\boldsymbol{Z}}_{(k+1)},\boldsymbol{\hat u}_{(k+1)}) .
\end{equation*}
Differentiating with respect to the approximation weights and applying the chain rule yields the following adaptive law:
\begin{equation}
{\boldsymbol{W}}_c^{(\ell+1)\mathrm{T}}= {\boldsymbol{W}}_c^{(\ell)\mathrm{T}}-\eta_{{W}_c} \left(\hat S^{(\ell)}_{k}-S_k^{(\ell)desired}
\right) \boldsymbol{\chi}_k \boldsymbol{\chi}_k^\mathrm{T} ,
\label{E_38}
\end{equation}
with $0<\eta_{{W}_c}<1$ being a critic's learning rate.

The following algorithm summarizes the online neural network implementation of the adaptive actor-critic technique. 
\begin{center}
  \textbf{Algorithm~3: Neural Network Implementation of the Actor-Critic Technique}
    \begin{enumerate}
    \item Initialize the actor and critic weights, ${\boldsymbol{W}}_a^0$ and ${\boldsymbol{W}}_c^0$.
      
    \item Calculate the approximate model-free policy $\boldsymbol{\hat u}_k^\ell$ using~(\ref{E_31}). 
      
    \item Apply~(\ref{E_udesired}) to update the actor weights.
      
    \item Calculate the measurements ${ \boldsymbol{\hat Z}}_{(k+1)}^\ell$.
      
    \item Apply~(\ref{E_38}) to update the critic weights.
      
    \item Terminate upon convergence of $\Vert	\hat S^{\ell+1}(\dots) - \hat S^{\ell}(\dots)\Vert $.
    \end{enumerate}
  \end{center}

\section{Results and Discussion}
\label{sec:results-discussion}

To validate the model-free adaptive controller, it is applied to a flexible wing aircraft model. Two case studies are simulated. They are designed to assess the convergence of the learning process and the controller's stability and robustness in the absence of a dynamic model and in the face of various disturbance levels.

The flexible wing airplane model experimented in~\cite{cook_spottiswoode_2005} is used to generate the necessary measurements at a trim speed of \SI{10.8}{\m\per\s}. This value is based on experimental data collected by Kilkenny in~\cite{Kilkenny_1986} to construct a flight envelope at different speeds to obtain valid aerodynamic data for a flexible wing system. Table~\ref{tab:configuration-parameters} shows the configuration parameters of the Hiway Demon hang glider adopted in this manuscript. Since this work is concerned with the automatic control of such types of aerial vehicles, the pilot may be regarded as a cargo mass, for instance. 

\begin{table}[htb]
  \centering
  \caption{Hang Glider (Hiway Demon) Configuration Parameters}
  \label{tab:configuration-parameters}
  \begin{tabular}{ll}
    \toprule 
    Parameter & Value \\
    \midrule 
    pilot mass & \SI{80}{\kg} \\
    wing mass &  \SI{31}{\kg} \\
    wing area & \SI{16.26}{\m\squared} \\
    wing span & \SI{10}{\m}  \\
    reference chord length & \SI{1.626}{\m}  \\
    hang point position & \SI{0.04}{\m}  \\
    hang strap length & \SI{1.2}{\m}  \\
    control frame position & \SI{0.06}{\m}  \\
    control frame height & \SI{1.65}{\m}  \\
    distance between the pilot's hands on the control bar& \SI{0.3}{\m} \\
    \bottomrule
  \end{tabular}
\end{table}

The discrete-time state space matrices of the decoupled longitudinal and lateral planes are given by
\begin{gather*}
  {\boldsymbol{A}}^{Lon} =
\begin{bmatrix*}[r]
0.9982  &  0.0065  &  0.0012 &  -0.0971\\
-0.0139  &  0.9774  &  0.1055 &   0.0136\\
0.0027  & -0.0043  &  0.9858 &  -0.0002\\
0.0000  & -0.0000  &  0.0099 &   1.0000
\end{bmatrix*} 
\\
{\boldsymbol{A}}^{Lat} =
\begin{bmatrix*}[r]
0.9977  & -0.0028  & -0.1069  &  0.0971  & -0.0131\\
-0.0131  &  0.8092  &  0.0677  & -0.0007  &  0.0001\\
0.0026  &  0.0332  &  0.9802  &  0.0001  & -0.0000\\
-0.0001  &  0.0090  &  0.0004  &  1.0000  &  0.0000\\
0.0000  &  0.0002  &  0.0099  &  0.0000  &  1.0000
\end{bmatrix*}
\\
{\boldsymbol{B}}^{Lon} =
\begin{bmatrix*}[r]
0.0000\\
0.0040\\
0.0741\\
0.0004
\end{bmatrix*}
~,~
{\boldsymbol{B}}^{Lat} =
\begin{bmatrix*}[r]
-0.0003\\
0.0327\\
0.0049\\
0.0002\\
0.0000
\end{bmatrix*} 
\end{gather*}
with initial conditions
$\boldsymbol{Z}_0^{Lon} = \mqty[28 & -1.0 & -0.6 & 1.0]^T$
and
$\boldsymbol{Z}_0^{Lat} = \mqty[10 & 0.9  & 0.9 &  1.0 & -0.5]^T$. For a more realistic setup the control signals $\boldsymbol{u}^{Lon}=\alpha$ and $\boldsymbol{u}^{Lat}=\beta$ are bound to $\pm \SI[parse-numbers=false]{\pi/3}{\radian}$.

The weighting matrices are set to
\begin{align*}
  {\boldsymbol{Q}}^{Lon}
  & =
    \begin{bmatrix*}[r]
      0.0006  &  0.0400  &  1.0000  &  1.0000
    \end{bmatrix*} 
  \\
  {\boldsymbol{Q}}^{Lat}
  & =
    \begin{bmatrix*}[r]
      0.0006  &  0.2500  &  0.2500  &  1.0000  &  1.0000
    \end{bmatrix*} 
  \\
  {R}^{Lon}
  & =0.9803 ~,~
    {R}^{Lat} = 0.9803 
\end{align*}
The actor weights are initialized to
\begin{align*}
  {\boldsymbol{W}}_a^{Lon} & = \mqty[0.0317 & 0.0014 & -2.4171 & -3.0740]^T
  \\
  {\boldsymbol{W}}_a^{Lat} & = \mqty[0.0404 & -0.6407 & -2.2064 & -1.8473 & -1.8872]^T.
\end{align*}

The critic weights are initialized to positive definite matrices, 
${\boldsymbol{W}}_c^{Lon} =10 \, I_{5 \times 5}, \, {\boldsymbol{W}}_c^{Lat} = 10 \, I_{6 \times 6}$.
The actor and critic learning rates are fixed to $\eta_{{W}_a} = \eta_{{W}_c} = 0.001$.
The simulations are performed in Matlab~2017 on a server with 16 virtual CPUs and 48~GB of memory. The sampling period is taken as~\SI{0.01}{\second}.

\subsection{Case Study 1}
In the first case study, the aforementioned longitudinal and lateral state space models are used to simulate the hang glider. The system's open- and closed-loop poles are listed in Table~\ref{tab:eigvwo}. As can be noticed, both the longitudinal and lateral open-loop dynamics are unstable. Integrating the proposed controller in a feedback loop asymptotically stabilized the system by shifting the poles to a stable region.

The optimal control gains for longitudinal and lateral directions are given as follows:
	\begin{align*}
	{\boldsymbol{W}}_a^{Lon} & = \mqty[0.5229  & -0.9582   &-2.6512  & -2.5554]^T,
	\\
	{\boldsymbol{W}}_a^{Lat} & = \mqty[0.0120  & -0.9219  & -2.4250  & -0.9458  & -1.1173]^T.
	\end{align*}
The evolution of the closed-loop poles during the learning process is depicted in Figure~\ref{fig:eigwo}.
The figure shows the open-loop poles with a {\color{red}  o};
the final closed-loop poles, resulting from the model-free control, with a {\color{blue}$\times$}; and
the closed-loop poles during the learning process with {\color{green}  $\bullet$}.
The pole search space covered both stable and unstable regions before eventually converging to asymptotically stable poles once the actor and critic weights are settled to their final values. 
The evolution of the adaptive weights of the actor and critic neural networks is demonstrated in Figure~\ref{fig:actcrtwo}. After an initial fluctuation period, all the weights converged to their steady-state values, as proven in Theorem~\ref{thm:stability-convergence}.
This is also illustrated in~Figure~\ref{fig:dynwo}, which demonstrates the dynamics and control signals of the longitudinal and lateral subsystems. It took about (\SI{15}{\second}) and (\SI{20}{\second}) for the longitudinal and lateral states to decay to zero respectively. It is worth noticing that the controller started with an aggressive approach in an attempt to stabilize the system as fast as possible. This is clear from the  control signals which reached their saturation values of $\pm \SI[parse-numbers=false]{\pi/3}{\radian}$ three times in the first (\SI{5}{\second}) and (\SI{10}{\second}) for the longitudinal and lateral systems respectively.

\begin{table}[htb]
	\centering
	\caption{Open- and closed-loop poles of the decoupled systems}
	\label{tab:eigvwo}
	\begin{tabular*}{20pc}{@{\extracolsep{\fill}}lll@{}}%
		\toprule
		\textbf{The longitudinal system:} &
		\\
		\midrule
		Open-loop poles
		&  $0.9801 \, e^{\pm 0.0219}$
		\\
		&  $1.0009 \, e^{\pm 0.0116}$
		\\
		\midrule
		Closed-loop poles
		&  $0.8771, \, 0.8884$
		\\
		&  $0.9975 \, e^{\pm  0.0123}$
		\\
		\midrule
		\textbf{The lateral system:} & 
		\\
		\midrule
		Open-loop poles
		&  $0.7978, \, 0.9949$
		\\
		&  $0.9973 \, e^{\pm 0.0088}$
		\\
		&  $1.0000$
		\\
		\midrule
		Closed-loop poles
		&  $0.7825, \, 0.9727$
		\\
		&  $0.9960$
		\\
		&  $0.9969 \, e^{\pm 0.0082}$
		\\
		\hline
	\end{tabular*}
\end{table}

\subsection{Case Study 2}
In this scenario, the robustness of the proposed approach is tested against unmodeled aerodynamics and uncertainties. Since, the aerodynamic models of the flexible wing aircraft are unknown, artificial nonlinearities are simulated by intentionally adding a degree of randomness (for each entry of $A, \, B,$ and $Z_k$) around the nominal dynamical parameters (i.e., the trim flight condition).
The different time-dependent white-noise disturbances and uncertainties are computed at every time step (i.e., evaluation step) from a normal distribution $\mathcal{N}(0,\,1)$. The uncertainty levels in the aerodynamics and the states are scaled up to $\pm 50\%$ and $\pm 20\%$ of their respective nominal values.
The noise in the states is very useful in a real-world scenario as it reflects the typical noise in the feedback data measurements introduced by the sensors.

This simulation scenario employs a combined control scheme, where the overall optimal control signal is divided into parts; First, the Riccati approach is allowed to find the optimal control signal using the nominal aerodynamic parameters (i.e., nominal undisturbed aerodynamics at the trim speed). Second, the learning system is integrated to support the optimal online control decision for uncertainties. 
The Riccati solution assumes a perfect knowledge of the system's model. Hence, in this case, the adaptive model-free controller can be seen a regulatory controller that compensates for the irregularities that are missed by the Riccati control signal. Hence, the resulting systems will have following form
\begin{eqnarray*}
	\boldsymbol {Z}_{k+1}=(\boldsymbol{A}+\boldsymbol{\Delta A}_k) (\boldsymbol{Z}_{k}+\boldsymbol{\Delta Z}_{k})+\left(\boldsymbol {B} \, \boldsymbol{u}^{R}_{k}+\Delta \boldsymbol{B}_k \, \boldsymbol{u}^{MF}_{k}\right), 
	\label{E_39}
\end{eqnarray*}
where $\boldsymbol{\Delta Z}_k,$ $\boldsymbol{\Delta A}_k,$ and $\Delta \boldsymbol{B}_k$ are the induced randomnesses in each entry of the states $\boldsymbol{Z}_k,$ system matrix $\boldsymbol{A}$, and control input matrix $\boldsymbol{B}$ respectively. $u_k^{R}$ is the control input evaluated by the Riccati control solution, while $u^{MF}_k$ is the control signal generated by the model-free controller.

Later on, the control law obtained by the model-free learning process (i.e., $u^{MF}_k$) is compared with other situations as if it is applied solely to the system (i.e., without the supporting Riccati regulation part), which simply means coming back to the first scenario, where the control signal is decided completely by the learning process.

The discrete-time state space matrices of the decoupled longitudinal and lateral frames at last evaluation instance \textit{k} are given by 
	\begin{gather*}
	{\boldsymbol{A}}^{Lon}_k =
	\begin{bmatrix*}[c]
	0.9983  &  0.0055 &  0.0014  & -0.0907\\
	-0.0145  &  0.9871 &  0.1127  &  0.0108\\
	0.0019  & -0.0034 &  0.9864  & -0.0001\\
	0  & 0 &  0.0091  &  1.0000
	\end{bmatrix*},
	\\
	{\boldsymbol{A}}^{Lat}_k =
	\begin{bmatrix*}[c]
	0.9971 &  -0.0028 &  -0.0987  &  0.1106 &  -0.0126\\
	-0.0085 &   0.8307 &   0.0734  & -0.0005 &   0.0001\\
	0.0013 &   0.0338 &   0.9860  &  0.0001 &  0\\
	0 &   0.0076 &   0.0003  &  1.0000 &   0\\
	0 &   0.0002 &   0.0092  &  0 &   1.0000
	\end{bmatrix*},
	\\
	{\boldsymbol{B}}^{Lon}_k =
	\begin{bmatrix*}[c]
	0\\
	0.0031\\
	0.0544\\
	0.0003\\
	\end{bmatrix*}
	~,~
	{\boldsymbol{B}}^{Lat}_k =
	\begin{bmatrix*}[c]
	-0.0002\\
	0.0427\\
	0.0030\\
	0.0002\\
	0
	\end{bmatrix*} .
	\end{gather*}
The final optimal control gains for the longitudinal and lateral directions, recorded by the learning process at last evaluation instance \textit{k}, are given as follows:
\begin{align*}
{\boldsymbol{W}}_a^{Lon} & = \mqty[0.1961  & -0.0554 &  -2.4252 &  -3.0799]^T,
\\
{\boldsymbol{W}}_a^{Lat} & = \mqty[-0.0038 &  -0.6563 &  -2.3076 &  -1.9291 &  -2.0037]^T.
\end{align*}

In this scenario, the different eigenvalues which are evaluated using the disturbed system ($A+\Delta A_k$ and $B+\Delta B_k$) at instance \textit{k} are represented graphically as follows; The open-loop poles of the disturbed system are given the {\color{red} o} marks, while the notations {\color{blue} o} refer to the final closed-loop poles of the disturbed system employing Riccati control gains calculated using ($A+\Delta A_k$ and $B+\Delta B_k$). The closed-loop poles (represented by {\color{blue}$\times$}) are the result of employing the combined model-free approach and the Ricatti control solution using nominal system ($A$ and $B$). The notations {\color{red}$\times$} indicate the closed-loop poles using the model-free control gains only. The notations {\color{green} $\bullet$} and {\color{red} $\bullet$} represent the spectrum of the closed-loop poles using the learning process and Riccati approach separately. Finally, the symbols {\color{blue} $\bullet$} denote the spectrum of the closed-loop poles of the combined learning-Riccati process.

As listed in Tables~\ref{tab:eigwlon}  \& \ref{tab:eigwlat}, the open-loop poles of the disturbed longitudinal and lateral subsystems at last evaluation step \textit{k} are unstable. Tables~\ref{tab:eigwlon}~\&~\ref{tab:eigwlat} and Figures~\ref{fig:eigw}~\&~\ref{fig:eig_new} show the closed-loop poles of the longitudinal and lateral dynamics. Figure~\ref{fig:eig_new} highlights the ability of the online stand-alone learning process to explore more stable areas, which resulted in faster dominant modes compared to those achieved using Riccati or combined Riccati-learning processes.
It is worth noticing how the learning instances (poles) in Figure~\ref{fig:eigw} are denser compared to Figure~\ref{fig:eigwo}. This is due to the longer time taken by the actor and critic weights to converge to their steady-state values, as can be witnessed from Figure~\ref{fig:actcrtw}. The final values of the closed-loop poles for the longitudinal and lateral systems listed in Tables~\ref{tab:eigwlon}  \& \ref{tab:eigwlat} and hence shown in Figure~\ref{fig:eigw} emphasized the superiority of the proposed learning approach even when it is working in stand-alone mode. The dominant modes of the stand-alone learning controller are asymptotically stabilized and became much faster compared to all other situations. The spectrum of the closed-loop poles evolution calculated by stand-alone Riccati approach, combined Riccati-learning controller, and stand-alone learning control system, exhibited by Figure~\ref{fig:eig_new}, assure that whenever the learning process exists, it widens the stability region and allows the system to take more quicker decisions, even in highly and continuously disturbed aerodynamic environment.

The intelligent adaptive controller stabilized the hang glider by again forcing the unstable poles to slide to an asymptotically stable region. 
Figure~\ref{fig:dynw} demonstrates the asymptotic stability behavior of the decoupled dynamical systems. The longitudinal and lateral control input signals and their associated dynamics took almost (\SI{3}{\second}) and (\SI{6}{\second}) to practically vanish. Despite the relatively significant effort the controller had to apply to stabilize the lateral dynamics, the stabilization of the longitudinal dynamics were much easier to achieve. This is clear from the span of both control signals. This fast convergence and relative control comfort is due to the synergistic integration of the Riccati control signal, stemming from the nominal model, and the model-free control policy compensating for the imperfections in that model. 
Such behavior is tested by disabling the Riccati control component. The results are revealed in Figure~\ref{fig:dyn_withnoise_woriccati}. This time, it took the longitudinal and lateral dynamics more effort and a longer time to stabilize (about (\SI{10}{\second}) and (\SI{15}{\second}), respectively). 
It demonstrates that despite the time-variant noise in the system dynamics, the intelligent model-free controller is able to control the aircraft even without the help of a Riccati controller. 
\begin{table}
    \centering
    \caption{Open- and closed-loop poles of the longitudinal disturbed system \\ at the last evaluation step \textit{k}}
    \label{tab:eigwlon}
    \begin{tabular*}{20pc}{@{\extracolsep{\fill}}lll@{}}%
        \toprule
        \textbf{The longitudinal system:} &
        \\
        \midrule
        Open-loop poles of the disturbed system
        &  $0.9859 \,  e^{\pm 0.0200}$
        \\
        $(\boldsymbol{A}+\Delta \boldsymbol{A}^{Lon}_k) \, \& \,(\boldsymbol{B}+\Delta \boldsymbol{B}^{Lon}_k)$
        &  $1.0002 \,  e^{\pm 0.0102}$
        \\
        \midrule
        Closed-loop poles using Riccati approach;
        &  $0.9634 \,  e^{\pm 0.0090}$
        \\
        The control gains were calculated using 
        &  $0.9964 \,  e^{\pm 0.0089}$
        \\
        system: $(\boldsymbol{A}\, \& \,\boldsymbol{B})$ &
        \\
        \midrule
        Closed-loop poles using Riccati approach;
        &  $0.9623 \,  e^{\pm 0.0130}$ 
        \\
        The control gains were calculated using 
        &  $0.9967 \,  e^{\pm 0.0093}$
        \\
        system: $(\boldsymbol{A}+\Delta \boldsymbol{A}^{Lon}_k) \, \& \,(\boldsymbol{B}+\Delta \boldsymbol{B}^{Lon}_k)$ & \\ 
        \midrule
        Closed-loop poles using the combined  
        &  $0.9781 \,  e^{\pm 0.0178}$
        \\ Riccati-model-free approach
        &  $0.9964 \,  e^{\pm 0.0065}$
        \\
        \midrule
        Closed-loop poles using the model-free
        &  $0.8713,  \, 0.9751$
        \\ approach
        &  $0.9963 \,  e^{\pm 0.0100}$
        \\
        \hline
    \end{tabular*}
\end{table} 
\begin{table}
	\centering
	\caption{Open- and closed-loop poles of the lateral disturbed system at the last\\  evaluation step \textit{k}}
	\label{tab:eigwlat}
	\begin{tabular*}{20pc}{@{\extracolsep{\fill}}lll@{}}%
		\toprule
		\textbf{The lateral system:} &
		\\
		\midrule
		Open-loop poles of the disturbed system
		&  $0.8167, \, 0.9947, \, 1$
		\\
		$(\boldsymbol{A}+\Delta \boldsymbol{A}^{Lat}_k) \, \& \,(\boldsymbol{B}+\Delta \boldsymbol{B}^{Lat}_k)$
		&  $1.0012 \,  e^{\pm 0.0037}$
		\\
		\midrule
		Closed-loop poles using Riccati approach;
		&  $0.8165$ 
		\\
		The control gains were calculated using 
		&  $0.9948 \, e^{\pm 0.0046}$
		\\
		system: $(\boldsymbol{A}\, \& \,\boldsymbol{B})$ &
		$0.9976\, e^{\pm 0.0085}$
		\\
		\midrule
		Closed-loop poles using Riccati approach;
		&  $0.8156$
		\\
		The control gains were calculated using 
		&  $0.9914 \,  e^{\pm 0.0057}$
		\\
		system: $(\boldsymbol{A}+\Delta \boldsymbol{A}^{Lat}_k) \, \& \,(\boldsymbol{B}+\Delta \boldsymbol{B}^{Lat}_k)$ & 
		$0.9976 \,  e^{\pm 0.0055}$
		\\ 
		\midrule
		Closed-loop poles using the combined  
		&  $0.8152$
		\\ Riccati-model-free approach
		&  $0.9947 \,  e^{\pm 0.0047}$
		\\
		& $0.9974 \,  e^{\pm 0.0082}$
		\\
		\midrule
		Closed-loop poles using the model-free
		&  $0.8105, \, 0.9824$
		\\ approach
		&  $0.9923$ 
		\\
		&$0.9967 \,  e^{\pm 0.0070}$
		\\
		\hline
	\end{tabular*}
\end{table}

\begin{figure*}[htb]
	\centering
	\includegraphics[width=0.9\linewidth]{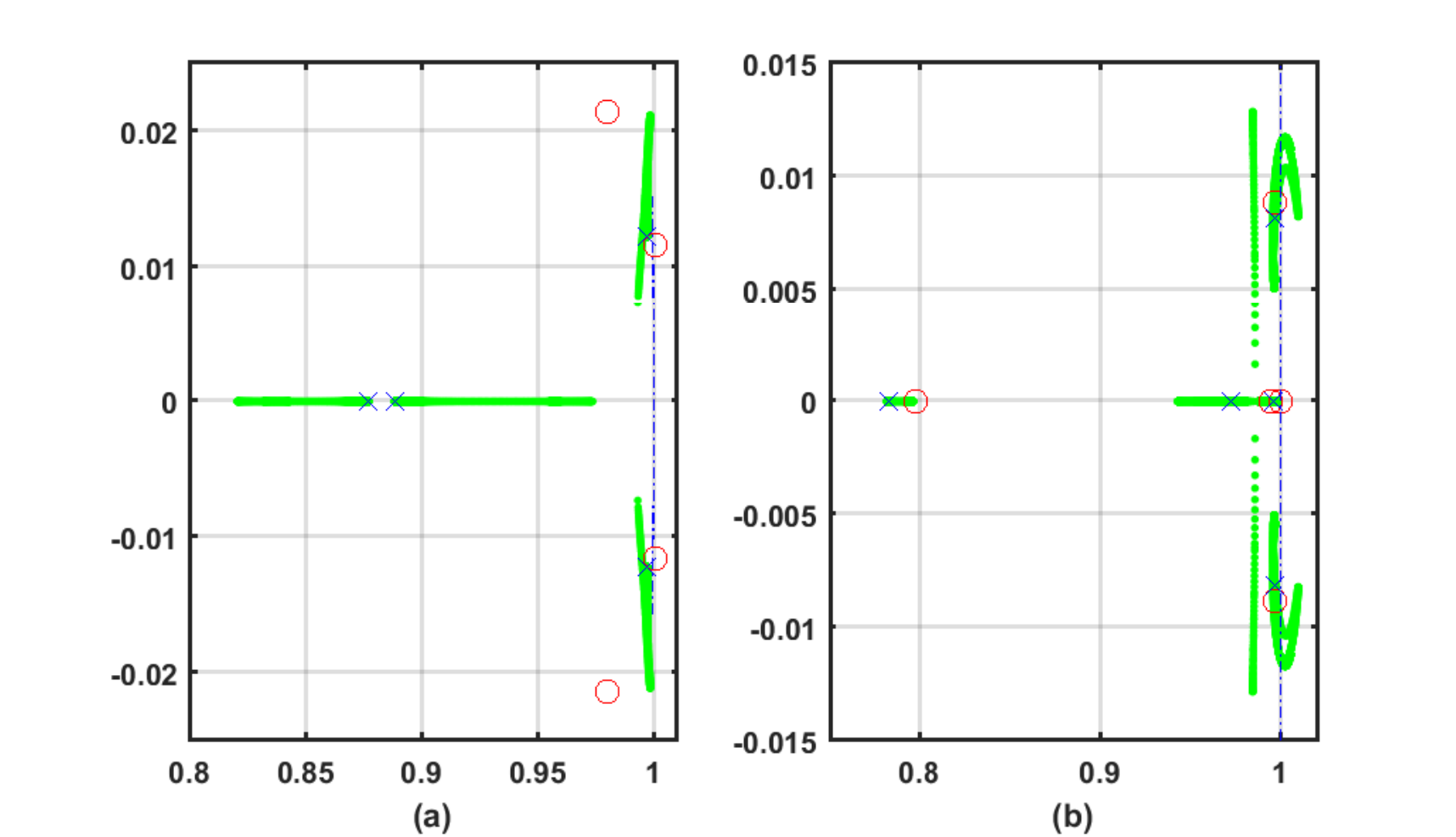}
	\caption{Evolution of the closed-loop poles during learning: (a)~longitudinal and~(b) lateral dynamics.}
	\label{fig:eigwo}
\end{figure*}

\begin{figure*}[htb]
	\centering
	\includegraphics[width=0.9\linewidth]{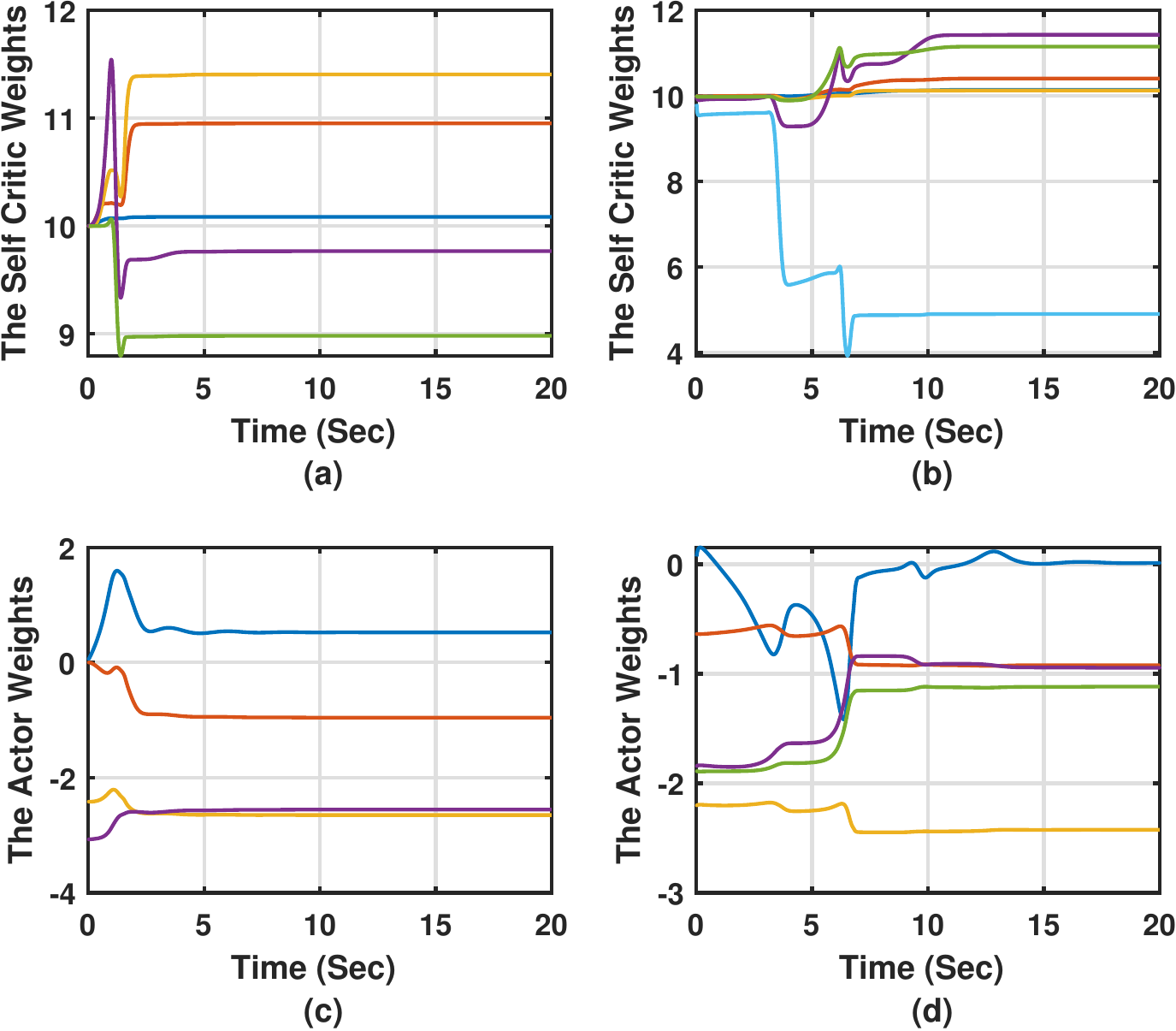}
	\caption{Actor-critic weights: (a),(c)~longitudinal and (b),(d)~lateral dynamics.}
	\label{fig:actcrtwo}
\end{figure*}

\begin{figure*}[htb]
	\centering
	\includegraphics[width=0.7\linewidth]{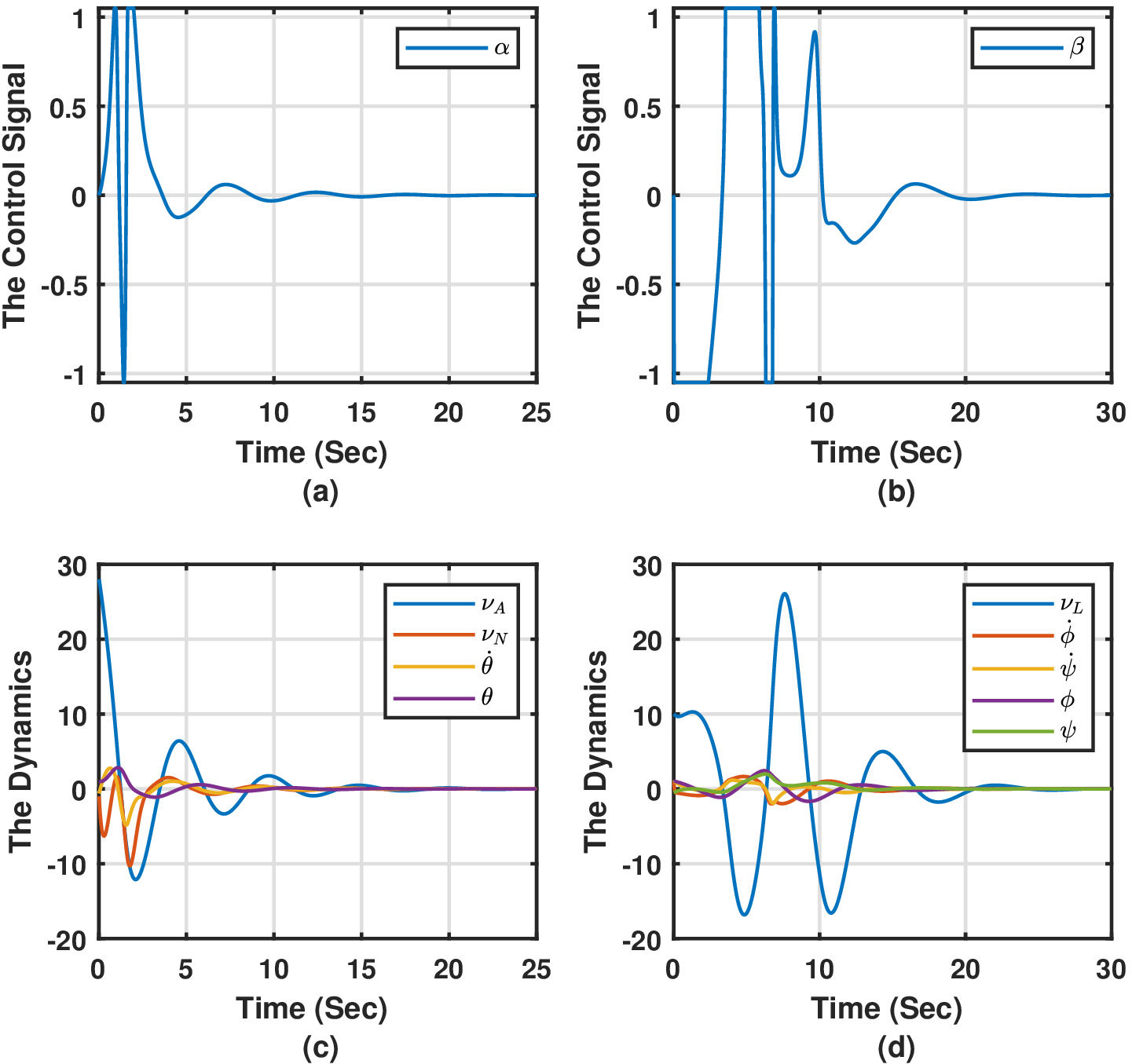}
	\caption{The control signals and dynamics of (a),(c):~longitudinal; (b),(d):~lateral. All the signals are in their respective SI units.} 
	\label{fig:dynwo}
\end{figure*}

\begin{figure*}[htb]
  \centering
  \includegraphics[width=0.7\linewidth]{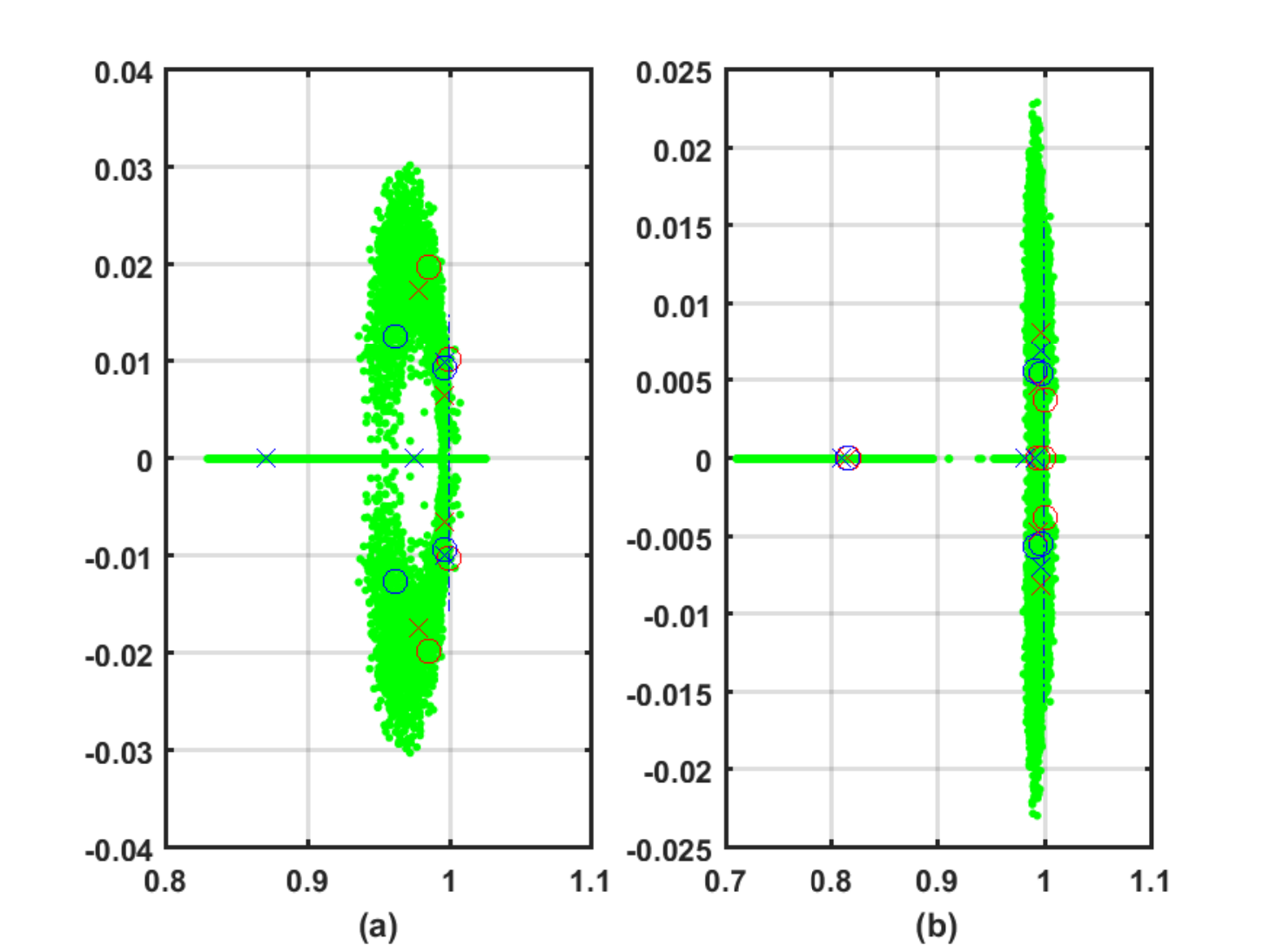}
  \caption{Evolution of the closed-loop poles during learning of the disturbed system: (a)~longitudinal and~(b) lateral dynamics.}
  \label{fig:eigw}
\end{figure*}

\begin{figure*}[htb]
	\centering
	\includegraphics[width=0.8\linewidth]{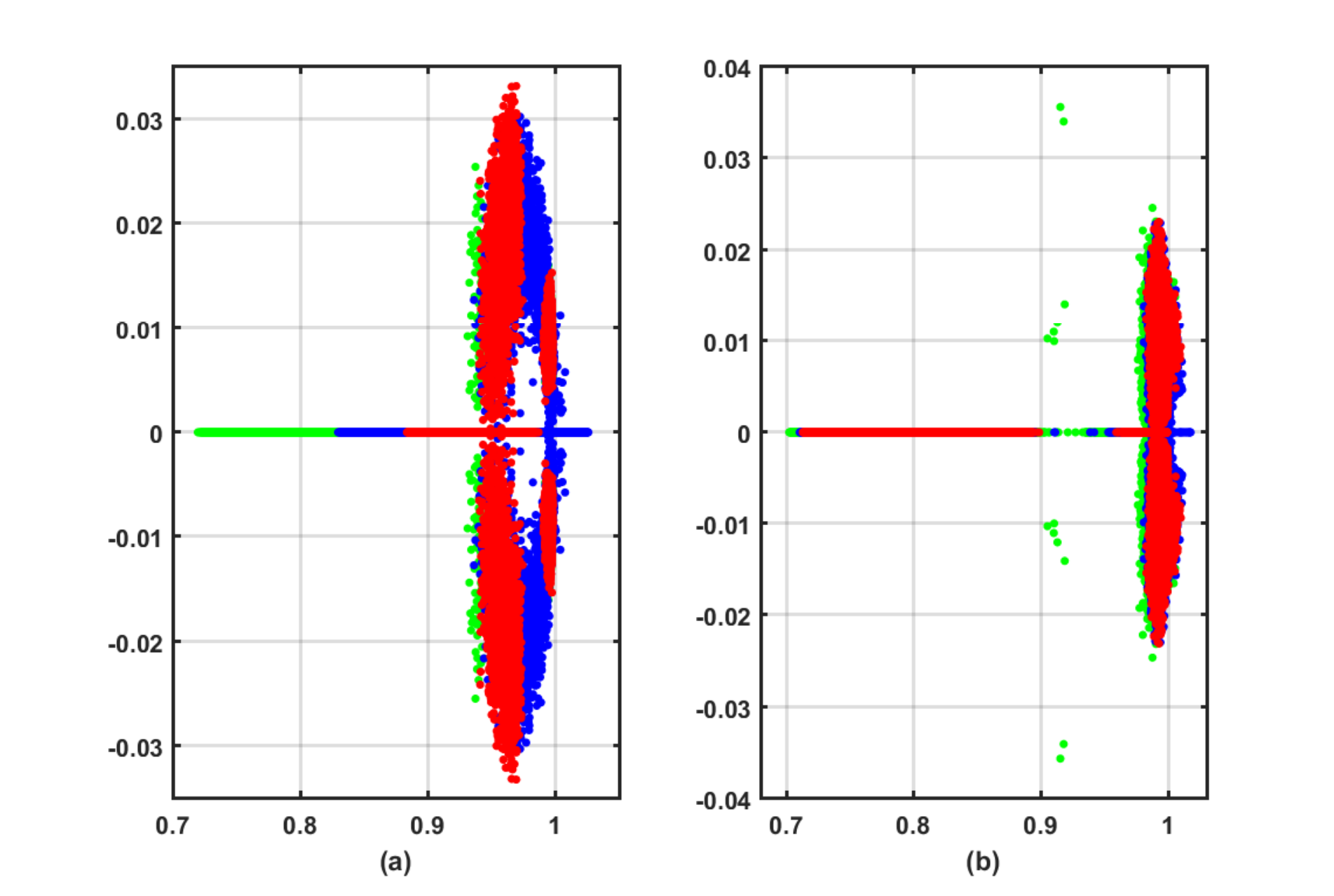}
	\caption{Evolution of the closed-loop poles during learning of the disturbed system using the stand-alone adaptive learning scheme, Riccati approach, combined Riccati-learning mechanism: (a)~longitudinal and~(b) lateral dynamics.}
	\label{fig:eig_new}
\end{figure*}

\begin{figure*}[htb]
  \centering
  \includegraphics[width=0.8\linewidth]{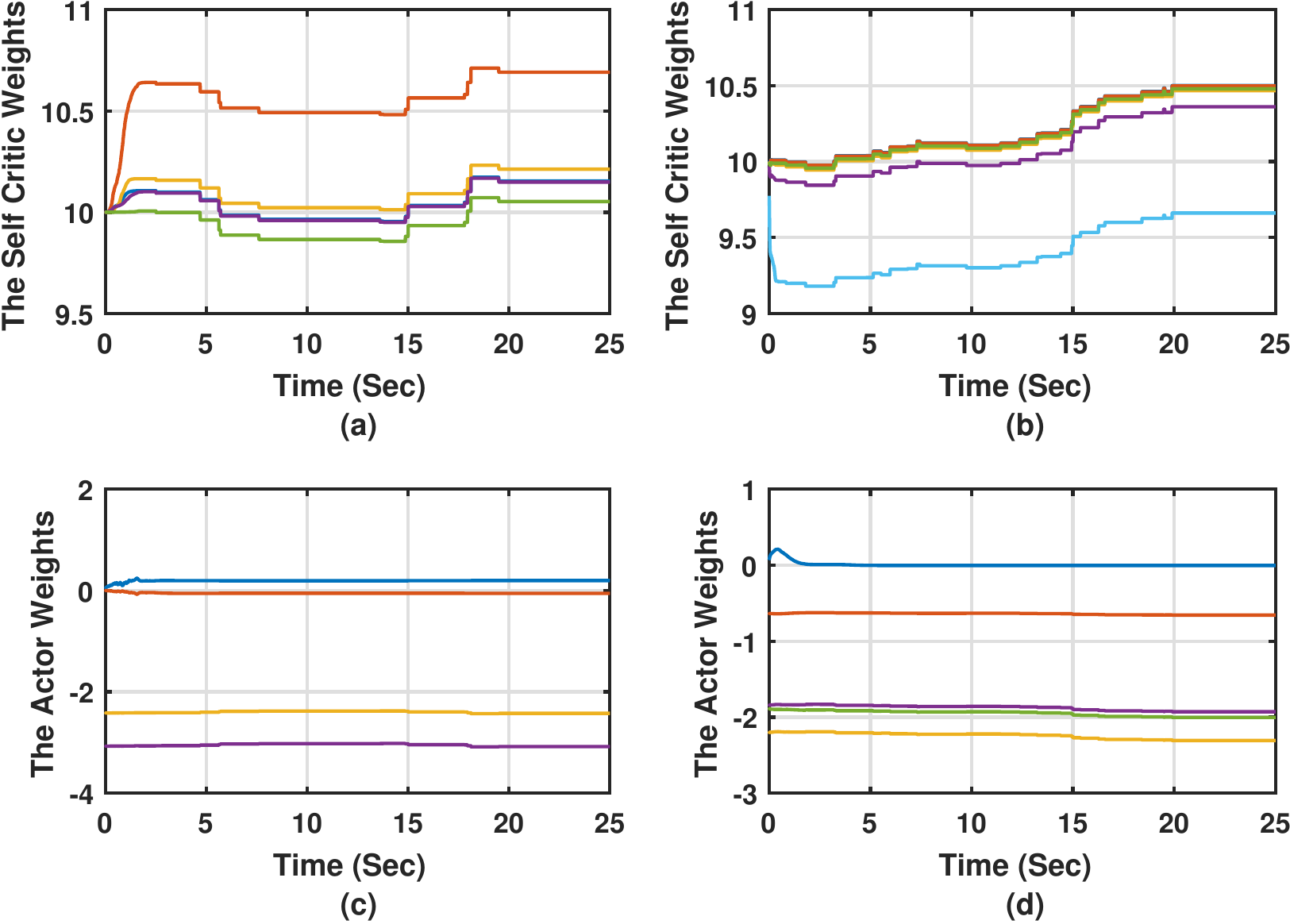}
  \caption{Actor-critic weights for the disturbed system: (a),(c)~longitudinal and (b),(d)~lateral dynamics.}
  \label{fig:actcrtw}
\end{figure*}

\begin{figure*}[htb]
  \centering
  \includegraphics[width=0.8\linewidth]{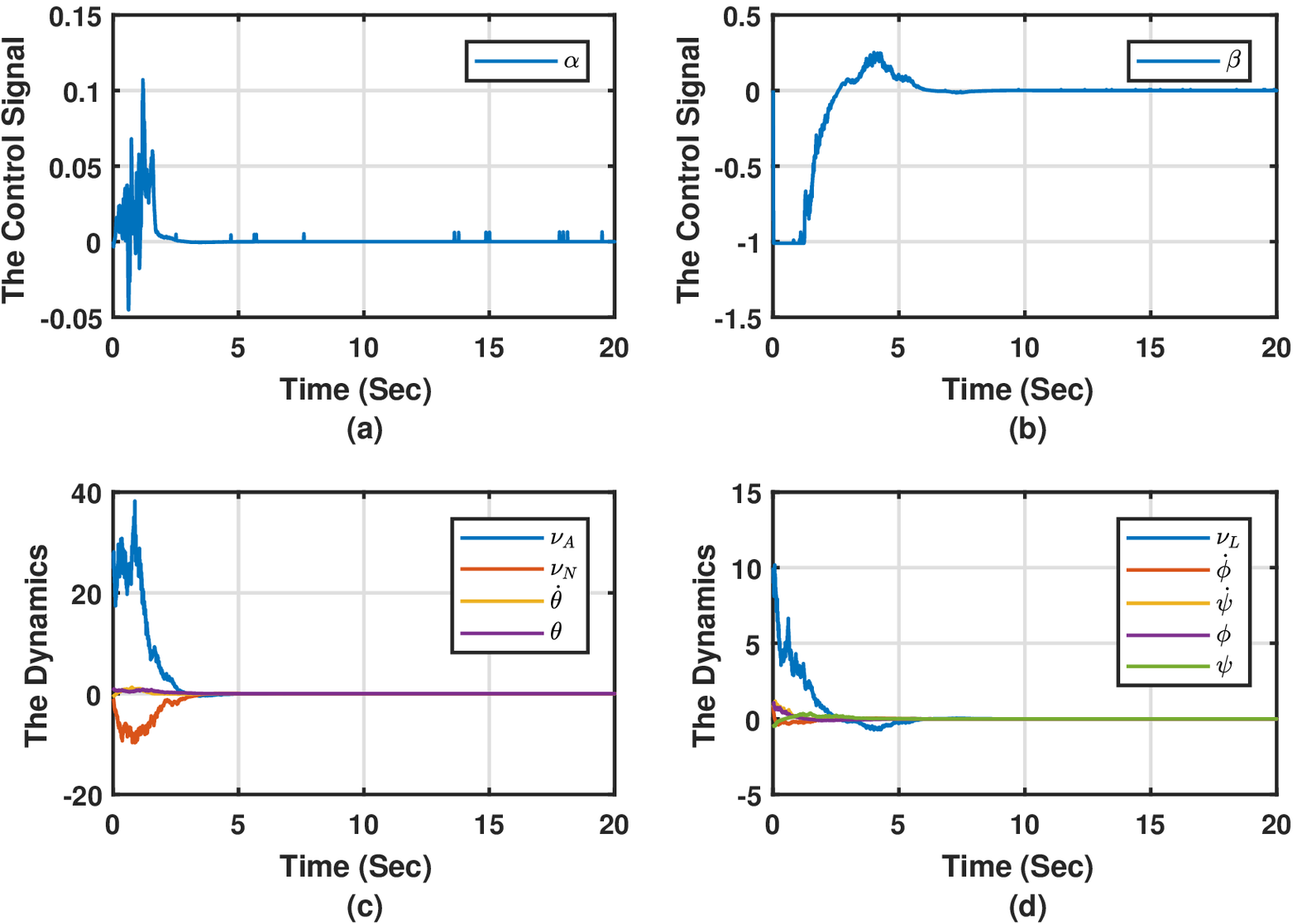}
  \caption{The control signal and dynamics of the disturbed system. (a),(c):~longitudinal; (b),(d):~lateral. All the signals are in their respective SI units.} 
  \label{fig:dynw}
\end{figure*}

\begin{figure*}[htb]
	\centering
	\includegraphics[width=0.8\linewidth]{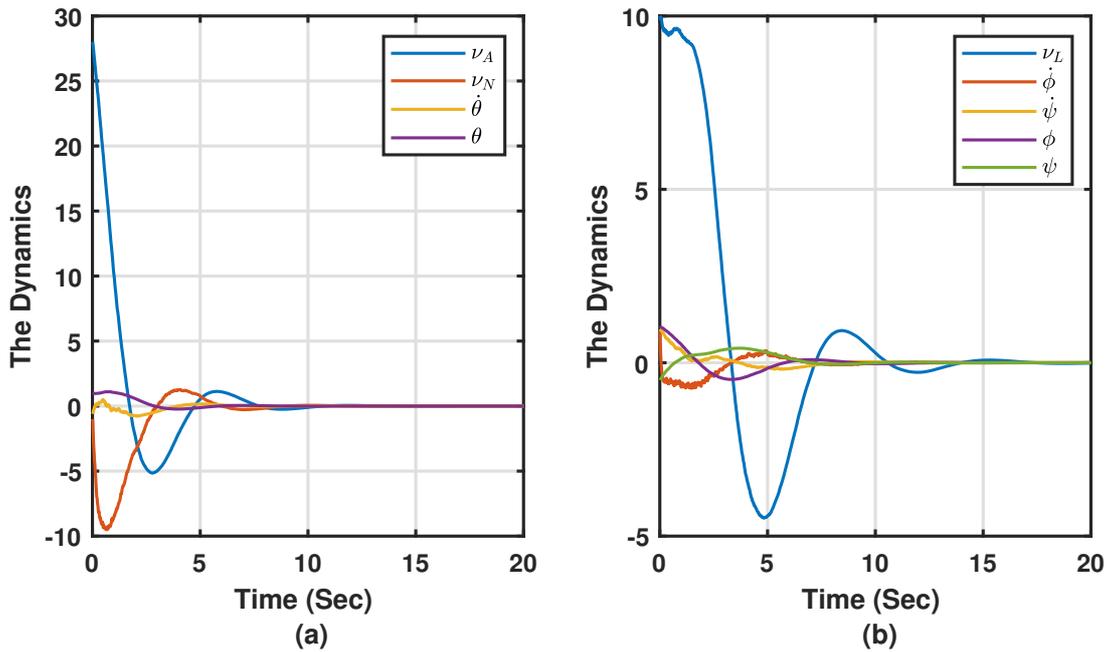}
  \caption{The dynamics of the disturbed system (without Riccati control component). (a):~longitudinal; (b):~lateral. All the signals are in their respective SI units.} 
	\label{fig:dyn_withnoise_woriccati}
\end{figure*}
\clearpage
\pagebreak

\section{Conclusion}
\label{sec:conclusion}
The automatic control of a flexible wing aircraft was addressed. It is challenging due to the lack of an accurate dynamical model faithfully describing the system dynamics. This work introduced an online adaptive learning controller to accomplish this task. It is based on machine learning and optimal control methods to generate model-free control decisions. Unlike traditional classical approaches, the proposed controller does not use or require any information about the system's dynamics. Instead, it finds optimal control approximations in real-time using few feedback signals that can be measured through sensors mounted aboard of the hang glider. The solution is implemented by employing a pair of actor-critic neural networks where the weights are adapted online through a gradient descent technique. The feedback loop is proven to be asymptotically stable while the reinforcement learning algorithm is guaranteed to converge. A Riccati solution framework is shown to be equivalent to solving the underlying model-free Bellman optimality equation. The validation results show the superior performance of the developed adaptive learning controller in terms of the stability and convergence even when the system is subject to significant time-variant disturbances.

\vspace{-300pt}
\bibliographystyle{Bib/iet}
\bibliography{Bib/mybibliography}

\end{document}